%%
%% This is file `sample-sigconf.tex',
%% generated with the docstrip utility.
%%
%% The original source files were:
%%
%% samples.dtx  (with options: `sigconf')
%% 
%% IMPORTANT NOTICE:
%% 
%% For the copyright see the source file.
%% 
%% Any modified versions of this file must be renamed
%% with new filenames distinct from sample-sigconf.tex.
%% 
%% For distribution of the original source see the terms
%% for copying and modification in the file samples.dtx.
%% 
%% This generated file may be distributed as long as the
%% original source files, as listed above, are part of the
%% same distribution. (The sources need not necessarily be
%% in the same archive or directory.)
%%
%% The first command in your LaTeX source must be the \documentclass command.
\documentclass[sigconf]{acmart}
%% NOTE that a single column version may be required for 
%% submission and peer review. This can be done by changing
%% the \doucmentclass[...]{acmart} in this template to 
%% \documentclass[manuscript,screen]{acmart}
%% 
%% To ensure 100% compatibility, please check the white list of
%% approved LaTeX packages to be used with the Master Article Template at
%% https://www.acm.org/publications/taps/whitelist-of-latex-packages 
%% before creating your document. The white list page provides 
%% information on how to submit additional LaTeX packages for 
%% review and adoption.
%% Fonts used in the template cannot be substituted; margin 
%% adjustments are not allowed.
%%
%%
%% \BibTeX command to typeset BibTeX logo in the docs
\AtBeginDocument{%
  \providecommand\BibTeX{{%
    \normalfont B\kern-0.5em{\scshape i\kern-0.25em b}\kern-0.8em\TeX}}}

%% Rights management information.  This information is sent to you
%% when you complete the rights form.  These commands have SAMPLE
%% values in them; it is your responsibility as an author to replace
%% the commands and values with those provided to you when you
%% complete the rights form.
\setcopyright{acmcopyright}
\copyrightyear{2022}
\acmYear{2022}
\acmDOI{xx.xxxx/xxxxxxx.xxxxxxx}

%% These commands are for a PROCEEDINGS abstract or paper.
\acmConference[KDD '22]{ACM Conference}{August 14--18, 2022}{Washington DC, USA}
% \acmBooktitle{KDD '22: SIGKDD conference on knowledge discovery and data mining, August 14--18, 2022, Washington DC, USA}
\acmPrice{15.00}
\acmISBN{978-1-4503-XXXX-X/22/06}

%%
%% Submission ID.
%% Use this when submitting an article to a sponsored event. You'll
%% receive a unique submission ID from the organizers
%% of the event, and this ID should be used as the parameter to this command.
%%\acmSubmissionID{123-A56-BU3}

%%
%% The majority of ACM publications use numbered citations and
%% references.  The command \citestyle{authoryear} switches to the
%% "author year" style.
%%
%% If you are preparing content for an event
%% sponsored by ACM SIGGRAPH, you must use the "author year" style of
%% citations and references.
%% Uncommenting
%% the next command will enable that style.
%%\citestyle{acmauthoryear}

%=====算法宏包=====
\usepackage{algorithm}
\usepackage{algorithmic}

%========表格注释========
\usepackage{threeparttable}

%========表格样式========
\usepackage{array}
\usepackage{multirow}
\newcommand{\tabincell}[2]{\begin{tabular}{@{}#1@{}}#2\end{tabular}}  

%==========子图=========
\usepackage{subfigure}

%==========公式加粗===========
\usepackage{amsmath, bm}
\usepackage{color}

%======定理、引理设声明=======

\newtheorem{proposition}{Proposition}[section]

%==============断词==============
\hyphenation{Block-chain}
\hyphenation{block-chains}
\hyphenation{stren-gth}

%%
%% end of the preamble, start of the body of the document source.
\begin{document}

%%
%% The "title" command has an optional parameter,
%% allowing the author to define a "short title" to be used in page headers.
\title{TRacer: Scalable Graph-based Transaction Tracing for Account-based Blockchain Trading Systems}
% TRacker: Transaction Tracking on Blockchain Trading Systems via personalized PageRank-based Graph Expansion

%%
%% The "author" command and its associated commands are used to define
%% the authors and their affiliations.
%% Of note is the shared affiliation of the first two authors, and the
%% "authornote" and "authornotemark" commands
%% used to denote shared contribution to the research.
\author{Zhiying Wu}
\authornotemark[1]
\email{wuzhy95@mail2.sysu.edu.cn}
\orcid{1234-5678-9012}

\author{Jieli Liu}
\authornote{Both authors contributed equally to this research.}
\email{liujli7@mail2.sysu.edu.cn}
\affiliation{%
  \institution{Sun Yat-sen University}
  \city{Guangzhou}
  \state{Guangdong}
  \country{China}
}

\author{Jiajing Wu}
\authornote{Corresponding author.}
\email{wujiajing@mail.sysu.edu.cn}
\affiliation{%
  \institution{Sun Yat-sen University}
  \city{Guangzhou}
  \state{Guangdong}
  \country{China}
}

\author{Zibin Zheng}
\email{zhzibin@mail.sysu.edu.cn}
\affiliation{
  \institution{Sun Yat-sen University}
  \city{Guangzhou}
  \state{Guangdong}
  \country{China}
}

%%
%% By default, the full list of authors will be used in the page
%% headers. Often, this list is too long, and will overlap
%% other information printed in the page headers. This command allows
%% the author to define a more concise list
%% of authors' names for this purpose.
\renewcommand{\shortauthors}{Zhiying Wu and Jieli Liu, et al.}

%%
%% The abstract is a short summary of the work to be presented in the
%% article.
\begin{abstract}

Security incidents such as scams and hacks, have become a major threat to the health of the blockchain ecosystem, causing billions of dollars in losses each year for blockchain users.
% Due to the pseudonymous nature of blockchain, it is difficult to reveal the real-world entities behind the blockchain transactions and recover the stolen funds from the massive transaction data. Recently, much effort has been devoted to tracing the flow of funds involved in illegal transactions.
To reveal the real-world entities behind the pseudonymous blockchain account and recover the stolen funds from the massive transaction data, much effort has been devoted to tracing the flow of illicit funds in blockchains recently.
However, most current tracing approaches based on heuristics and taint analysis have limitations in terms of universality, effectiveness, and efficiency. This paper models the blockchain transaction records as a blockchain transaction graph and tackles blockchain transaction tracing as a graph searching task. 
We propose \textbf{TRacer}, a scalable transaction tracing tool for account-based blockchains. 
To infer the relevance between accounts during graph searching, we develop a novel personalized PageRank method in TRacer based on the directed, weighted, temporal, and multi-relationship blockchain transaction graphs. 
To the best of our knowledge, TRacer is the first intelligent transaction tracing tool in account-based blockchains that can handle complex transaction actions in decentralized finance (DeFi).
Experimental results and theoretical analysis prove that TRacer can complete the transaction tracing task effectively at a low cost. 
All codes of TRacer are available at GitHub \footnote{https://github.com/wuzhy1ng/BlockchainSpider}.
% , whose data come from the open APIs \footnote{https://blockscan.com/} ensuring everyone to trace the illicit funds in the account-based blockchains without deployment.
\end{abstract}

%%
%% The code below is generated by the tool at http://dl.acm.org/ccs.cfm.
%% Please copy and paste the code instead of the example below.
%%
\begin{CCSXML}
<ccs2012>
    <concept>
        <concept_id>10010405.10003550.10003551</concept_id>
        <concept_desc>Applied computing~Digital cash</concept_desc>
        <concept_significance>500</concept_significance>
    </concept>
    <concept>
        <concept_id>10002950.10003624.10003633.10010917</concept_id>
        <concept_desc>Mathematics of computing~Graph algorithms</concept_desc>
        <concept_significance>500</concept_significance>
    </concept>
</ccs2012>
\end{CCSXML}

\ccsdesc[500]{Applied computing~Digital cash}
\ccsdesc[500]{Mathematics of computing~Graph algorithms}

%%
%% Keywords. The author(s) should pick words that accurately describe
%% the work being presented. Separate the keywords with commas.
\keywords{Blockchain, transaction tracing, personalized PageRank, local community discovery}

%% A "teaser" image appears between the author and affiliation
%% information and the body of the document, and typically spans the
%% page.
% \begin{teaserfigure}
%   \includegraphics[width=\textwidth]{sampleteaser}
%   \caption{Seattle Mariners at Spring Training, 2010.}
%   \Description{Enjoying the baseball game from the third-base
%   seats. Ichiro Suzuki preparing to bat.}
%   \label{fig:teaser}
% \end{teaserfigure}

%%
%% This command processes the author and affiliation and title
%% information and builds the first part of the formatted document.
\maketitle

\section{Introduction}
% 介绍区块链交易系统背景
% Since the debut of Bitcoin in 2009 \cite{nakamoto2008bitcoin}, various cryptocurrencies and blockchain technology which provide decentralized environments for cryptocurrency transactions have gained increasing popularity and attention in recent years~\cite{wu2021analysis}. 
% Incorporating peer-to-peer (P2P) technology, cryptography, and consensus protocol, blochchain is regarded as the fundamental technology supporting cryptocurrencies and allows users to participate in cryptocurrency trading with pseudonyms~\cite{zhao2021temporal}.
The rapid development of blockchain technology has aroused great attention of businesses and researchers recently. By incorporating peer-to-peer networks, cryptography, and consensus protocols, blockchain \cite{zheng2018blockchain} achieves a decentralized environment for trading and brings new vitality to traditional industries. Particularly, the typical account-based blockchain platform Ethereum has opened the era of blockchain 2.0 through the introduction of smart contracts, giving blockchain various possibilities of application. However, the pseudonymous nature of blockchain has also attracted a variety of illegal transaction activities like financial scams and hacks.
% For example, a vulnerability of ``The DAO'' smart contract in Ethereum \cite{wood2014ethereum} was exploited by hackers in 2016, and a large amount of investment worth over \$70 million was stolen \cite{chainanalysis2016}.
% According to a report given by Chainalysis \cite{chainanalysis2019}, a blockchain data analysis service provider, cryptocurrency transactions with a total amount of more than \$11.5 billion worth of cryptocurrency transactions were associated with illegal transaction activities. % 这个报告是2019的
According to a recent report of Chainalysis \cite{chainalysis2022crime}, a famous blockchain security company, the losses caused by illegal transaction activities in cryptocurrency-related businesses have exceeded \$14 billion during 2021. Along with the boom of DeFi, most of these illegal trading activities and malicious attacks are conducted in the account-based blockchain trading systems like Ethereum and Binance Smart Chain (BSC) \cite{binancesmartchain} since the boom of DeFi.
% Due to the enforcement of the know-your-customer (KYC) process in some blockchain exchanges, illegal profits on many blockchain trading systems are usually laundered into concealed and ``clean'' fund flows before being cashed out. 
% Therefore, in order to crack down on illegal transaction activities on blockchain, a wealth of efforts from both academia and industry have been devoted to tracking the flow of funds involved in illegal transaction activities \cite{yousaf2019tracing,wulei2021www,di2015bitconeview}, the purpose being to understand real-world entities behind each transaction, help victims recover the losses, and thus make it easier to counter money laundering, fraud, dark web trading, and other digital asset crimes. 

%% 第二段  描述图1异常交易检测和交易追踪的区别
With the publicly accessible blockchain transaction data, various technologies have been developed to combat financial crimes in blockchain trading systems \cite{wu2021analysis}, and these technologies can be divided into two categories, namely \textbf{proactive (pre-trade) risk warning} and \textbf{remedial (post-trade) money tracking}. Proactive risk warning refers to evaluating the risk of new transactions according to the historical behaviors of the related accounts and the existing label information. Data-based fraud detection \cite{WhoAreThePhisher}, attack detection \cite{wu2021defiranger}, and other types of illegal transaction detection technologies \cite{weber2019anti} can be categorized in this scope. However, as shown in Figure \ref{fig:traditional_fraud_detection}, although proactive early warning can raise warnings for risky transactions before the occurrence of the trade, it cannot prevent the criminals who has already gotten the money from laundering and cashing out of the ill-gotten gains from exchanges since the pseudonymous and distributed nature of blockchain systems. Therefore, a wealth of efforts have been devoted to the remedial money tracking of the ill-gotten gains \cite{yousaf2019tracing,wulei2021www,di2015bitconeview}, aiming to deanonymize the related money flows and help victims recover the losses. 

Figure \ref{fig:blockchain_transaction_tracking} shows a toy example of the remedial money tracing. 
Generally speaking, transaction tracing starts from a source and traces the flow of money to the targets. Here, the source represents a tracing object such as the blockchain account of a fraudster decamping with a large number of ill-gotten gains, and the targets indicate the accounts used to gather the ``clean'' funds awaiting cashing out.
% Usually, the targets are addresses possessing the laundered funds or exchange deposit addresses. 
Though the identity information of accounts is unknown in blockchain systems, once we locate target deposit addresses of exchanges that enforce Know-Your-Customer (KYC) processes, the related criminals can be identified and caught off-chain according to the KYC information provided by the exchanges \cite{9332279}.
\begin{figure}[t]
    \centering
    \subfigure[Proactive risk warning in blockchain]{
        \includegraphics[width=0.8\linewidth]{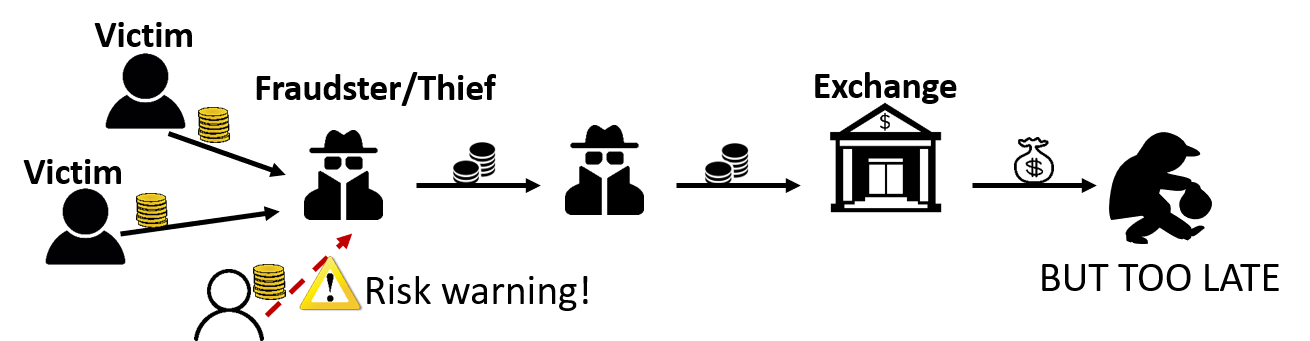}
        \label{fig:traditional_fraud_detection}
    }
    
    \subfigure[Remedial money tracking in blockchain]{
        \includegraphics[width=0.8\linewidth]{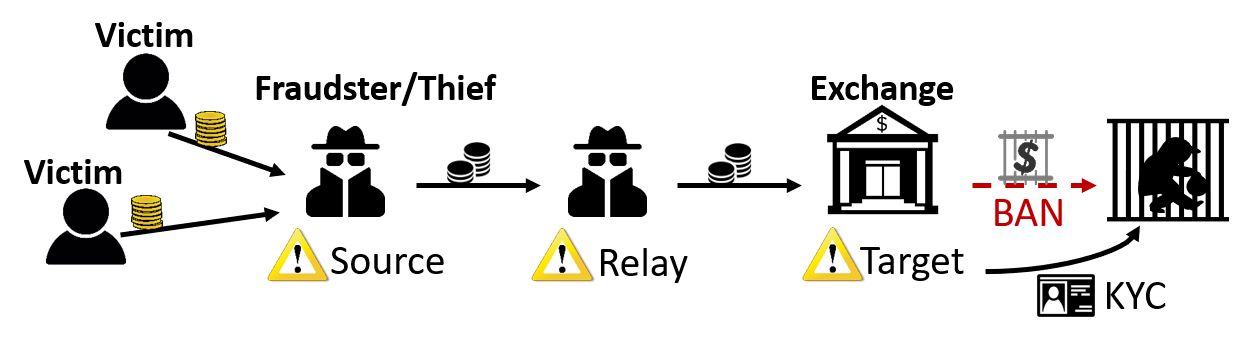}
        \label{fig:blockchain_transaction_tracking}
    }
    \vspace{-0.1in}
    \caption{(a) Proactive risk warning in blockchain raises warnings for risky transactions. %However, it is always too late to prevent criminals from laundering and cashing out of the ill-gotten gains from exchanges. 
    (b) Remedial money tracking in blockchain traces the money flow among the source and the targets possessing the laundered funds, also providing evidence to catch the criminals off-chain.}
    % 不足以阻止已发生的损失
    \vspace{-0.1in}

\end{figure}

% 区块链交易系统上开展交易追踪的挑战
% Yet transaction tracking on blockchain trading systems is a rather challenging task.
% \textbf{\textit{Firstly}}, because of the pseudonymous nature of blockchain, it is unlikely to enforce the KYC process to verify the identities and ascertain the potential risks of users during cryptocurrency transactions \cite{wu2021analysis}, which makes it extremely difficult to determine the flow direction of a certain amount of money.
% \textbf{\textit{Secondly}}, without identity information, conducting money tracking in a large amount of blockchain transaction records is like finding a needle in the haystack, which requires a low computational cost of the transaction tracking.

% 当前工作的问题
% Current approaches for blockchain transaction tracking \cite{zhao2015graph,phetsouvanh2018egret,oggier2020ego} are mainly based on the Bitcoin system and inspired by graph searching \cite{xu2004fighting,abiteboul2003adaptive} and taint analysis \cite{moser2014towards,tironsakkul2019probing} technologies.
% The main idea of existing studies is to start from the source node and search the possible paths of funds between the source and target nodes via certain rules.
Current approaches for blockchain transaction tracing \cite{zhao2015graph,phetsouvanh2018egret,oggier2020ego,tironsakkul2019probing} are mainly based on rule-based heuristics and taint analysis~\cite{moser2014towards}. 
However, as an emerging research topic, existing transaction tracing methods have limitations in terms of universality, effectiveness and efficiency. Particularly, most existing heuristic methods are designed for specific scenarios based on experts experience, and cannot be automatically and intelligently applied to various blockchain transaction scenarios. In addition, the time cost and end conditions of existing methods, especially those requiring manual verification and intervention, is not definite or well defined, making their time efficiency and effectiveness difficult to guarantee.
% 然而，大多数现有方法通常依赖于专家经验来针对具体案例进行设计，因此很难自动且智能地适用于各类区块链交易场景（普适性）。此外，现有方法的时间复杂性较高，且没有确定的和well designed end condition，使用它们在巨大的交易网络中时犹如大海捞针一样时间效率和有效性都难以保障。
% However, as a newly emerging area, there still exist some limitations in much of the existing work:
% \textbf{(1)} Without theoretical proofs, the end condition of most of the existing methods relies on experts, which makes it difficult to estimate the computational cost of these methods.
% Therefore, the time cost of running existing transaction tracking methods on a large-scale transaction network may be extremely high when the end condition is not well designed.
% \textbf{(2)} Most of the existing studies are heuristic methods designed for some specific cases, which rely heavily on expert experience. Currently, there is no common framework and evaluation criteria for transaction tracking on blockchain. Therefore, it is difficult to evaluate the universality and superiority of the existing methods.
%However, most of the existing methods are designed for Bitcoin and some specific scenarios like cross-chain transactions \cite{yousaf2019tracing} and mixing transactions \cite{beres2020blockchain,wulei2021www}. 
%And they are inapplicable to deal with the transaction tracing task in account-based blockchain trading systems due to the different transaction models and the dependence on expert experience. 
%Moreover, due to the massive transaction data in account-based blockchains, tracing with the manual analysis method like finding a needle in the haystack. And it is difficult to quickly respond to security incidents. 
Besides, the popularity of DeFi in account-based blockchains like Ethereum and BSC brings many new kinds of semantics to blockchain transaction actions, leading to high barriers to the transaction tracing task.
% Yet transaction tracing in blockchain trading systems is a rather challenging task like finding a needle in the haystack due to the large amount of transaction data.

\begin{figure*}[t]
    \centering
    \includegraphics[width=0.65\linewidth]{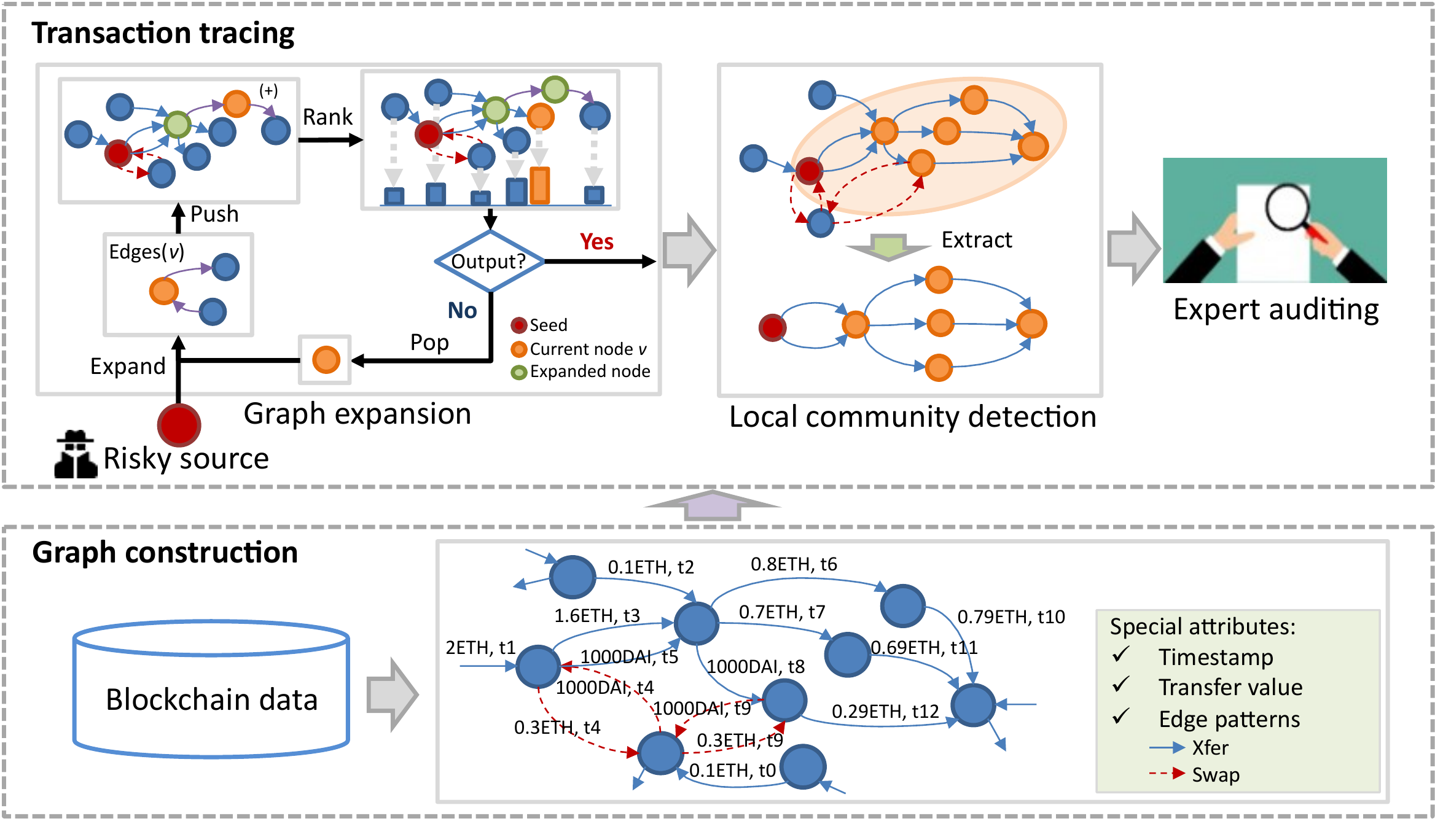}
    \caption{The framework of TRacer.}
    \label{fig:TRacer}
\end{figure*}

% 介绍本文的思路
In this paper, we propose the first general tool called \textbf{\underline{T}\underline{R}acer}, which is a transaction \underline{T}racing in account-based blockchain trading systems incorporating Personalized Page\underline{R}ank-based technologies. Referring to the framework displayed in Figure \ref{fig:TRacer}, TRacer first models the blockchain transaction data including the complex DeFi operation actions into a directed, weighted, temporal, and multi-relationship graph, and then conducts local community detection on a subgraph around the risky seed obtained by graph expansion. The output small-scale community is the traced money flow graph and can be further audited by experts.
Moreover, we propose a novel ranking algorithm based on personalized PageRank \cite{page1999pagerank} to reveal the relevance between the source and other accounts in blockchain systems. We introduce approximate personalized PageRank (APPR) \cite{andersen2006local,andersen2007local,yin2017local} to obtain the approximate solution of personalized PageRank, which can improve the scalability of the algorithm. Both graph expansion and local community detection in TRacer are based on the proposed ranking algorithm. 
%For the problems of effectiveness quantification, we design some metrics for understanding what kind of transaction tracking methods are effective. 
% It is worth mentioning that our proposed method has been applied and evaluated in transaction tracking tasks on several blockchain trading systems including Bitcoin, Ethereum, and Binance Smart Chain \cite{binancesmartchain} and so on, but may not be suitable for some privacy-enhancing blockchain trading systems like the Monero \cite{noether2014monero}. % 这些匿名系统都是UTXO上的
% Finally, we evaluate our method with metrics and network visualization analysis on five Ethereum transaction tracing cases.
Experimental results demonstrate the performance of TRacer on transaction tracing in account-based blockchains. The main contributions of this work can be summarized as follows:
\begin{itemize}
    % \item \textbf{Problem}: We give a problem definition of transaction tracking on Blockchain trading systems as finding the paths among the source node and the target nodes, and propose a general transaction tracking model named Push-Pop model.
    \item To the best of our knowledge, we are the first to study intelligent transaction tracing in account-based blockchain trading systems like Ethereum, which is an urgent problem with the booming security incidents in these systems.
    % \item \textbf{Approach}: We propose the TTR algorithm running on a transaction network, whose cost for convergence is $O(\frac{1}{\epsilon \alpha})$ controlled by two constant parameters of $\epsilon$ and $\alpha$. Additionally, we give the theoretical proof on how to set the parameters for ensuring our method is able to track the paths among the source node and the target nodes, even in the face of the worst conditions.
    \item We design and implement a general blockchain transaction tracing tool named TRacer, which is able to incorporate the complex semantics of transaction actions in DeFi. Particularly, a novel personalized PageRank method is employed to estimate the relevance score of accounts in TRacer.
    % \item \textbf{Evaluation}: We introduce metrics to quantify the effectiveness of transaction tracking methods on Blockchain trading systems, and experimental results demonstrate the priority of the proposed method. Moreover, we collect five standard datasets in Ethereum from real cases for blockchain transaction tracking and anti-money laundering research, and these codes for experiments were published on our Github\footnote{https://github.com/wuzhy1ng/BlockchainSpider}.
    % We thoroughly evaluate the proposed approach by comparing with the existing benchmarks on the historical loan data and achieved state-of-the-art performance. In addition, we conduct empirical studies in real-world risk control applications, and the result proves our method could prevent major financial losses for the financial institution
    \item We thoroughly evaluate the performance of TRacer via both theoretical analysis and experimental evaluation, and the results demonstrate the effectiveness and the scalability of TRacer. We also contribute a benchmark dataset verified by several security companies for evaluating the transaction tracing methods.
\end{itemize}

\section{Related Work}
To combat crimes in blockchain, transaction tracing is a critical task for blockchain security companies and the community. Current methods for blockchain transaction tracing are mainly designed for Bitcoin-like blockchain systems and heavily rely on expert experience. The most widely used transaction tracing method is Breath First Search (BFS) \cite{zhao2015graph,phetsouvanh2018egret}, and many related tools like skytrace\footnote{https://www.certik.com/skytrace} and coinholmes\footnote{https://trace.coinholmes.com} are available. Since BFS is insufficient to reveal the audit priority of different money tracking directions, the taint analysis technologies \cite{moser2014towards,tironsakkul2019probing,di2015bitconeview} have been proposed for bitcoin tracing according to the amount value and the order of multiple outputs in each transaction. However, these technologies rely on expert analysis and cannot automatically output the money flows from the source to the targets.

Based on the co-spending clustering heuristic in Bitcoin that the input addresses of the same transaction belong to the same entity \cite{Reid2013}, Huang et al. proposed to track the bitcoin flow of ransomware to when the money is being cashed out \cite{HuangTracking2018}.
% Tracking Ransomware End-to-end
Meiklejohn et al. \cite{meiklejohn2013fistful} utilized the change address heuristic to deanonymize the money flows in Bitcoin.
Moreover, some rule-based transaction tracing methods were proposed for cross-chain scenarios and mixing scenarios. For example, Yousaf et al. traced the cross-chain money flows by identifying the transactions which happen close in time and have similar amount value \cite{yousaf2019tracing}. 
However, these rule-based methods are designed for specific protocols and they are inapplicable to transaction tracing in account-based blockchains.

Personalized PageRank models the relevance of nodes in a network to a specific node, and has been widely used in web search \cite{page1999pagerank}, recommendations \cite{WTFGupta2013}, etc. Recently many variants and efficient approximations of personalized PageRank have been proposed for large-scale applications \cite{bojchevski2020scaling}. In this paper, we model the account-based blockchain data as a directed, weighted, temporal, and multi-relationship graph, and formalize the transaction tracing problem as a graph searching problem. We develop a scalable and intelligent transaction tracing tool for account-based blockchains based on personalized PageRank and its approximate solutions.
\section{Preliminaries} 
\label{sec:preliminaries}
% Before introducing our method, this section discusses the problem definition and provides preliminaries for this paper.

\subsection{Account-based blockchains}
% 不同于比特币系统采用基于UTXO的交易模型，以以太坊为首的基于账户的区块链系统xxxx
% 智能合约  DApp
% 由DApp引出Token  Token标准
% 去中心化交易所
Traditional Bitcoin-like blockchains are based on the transaction-centered model \cite{wu2021analysis} whose building block is unspent transaction output (UTXO), which is an indivisible cryptocurrency chunk locked to a specific owner. Each transaction has multiple inputs made up of UTXOs and multiple outputs, and there is a change address in the outputs used to receive the change.

Different from UTXO-based blockchains, account-based blockchain systems like Ethereum and BSC have the concept of account similar to that of traditional banking accounts. 
There are two kinds of accounts in account-based blockchains, namely external owned account (EOA) and smart contract account. 
Accounts are the initiators of blockchain transactions and record some dynamic state information including account balance. 
Especially, each smart contract account is associated with a piece of executable bytecode. There are also two types of transactions in the systems. 
A transaction triggered by an EOA is called external transaction, while a transaction triggered by an invocation of the function in a smart contract account is called internal transaction. 
In addition, an EOA can invoke functions of a smart contract in an external transaction and further result in many internal transactions. Transaction hash consisting of a set of numbers and letters is used to uniquely identify a particular transaction from or to an EOA.
Due to the support of smart contract, everyone can take the advantage of blockchain technology and build DApp projects in account-based blockchains.
Besides the native currency in blockchain systems, there are many third-party tokens representing assets, currency, or access rights of projects in the account-based blockchain ecosystem. 
To facilitate token development and exchange, some token standards are launched in blockchain trading systems, e.g., the ERC20 token standard in Ethereum. 
There are also many DeFi DApps that offer financial services such as token lending and exchange.
% 原来：
% 介绍加密数字货币包括原生币和Token
% There are two types of cryptocurrencies in account-based blockchains: native token and third-party token complimenting some standards.
% For Ethereum, the native token Ether and 
% 介绍什么是DeFi，DeFi世界中的交易语义有哪些

\subsection{Problem Definition}
\label{sec:problem_define}
The transaction tracing task in account-based blockchain trading systems aims to trace the money flows from a given source to the targets that gather the money awaiting cashing out and point the priority money flows for auditors to further verify manually. By modeling the blockchain transaction relationships as graph where nodes indicate the accounts and edges indicate the money flow relationships, we can formulate the transaction tracing problem as follows. 

% \begin{prbdef}
\textbf{Problem formulation.} \textit{Given a source node $s$ in a transaction graph $G$, the goal is to search a connected money transfer subgraph $G_{s}$ from $s$ to its money flow targets around the neighborhood of $s$. $G_{s}$ should contain as many target nodes as possible in the smallest possible size of graph for manual verification.}

\subsection{Approximate Personalized PageRank}
% To ensure the effectiveness of the tracking model, we usually need to define an importance measurement of other nodes in the network to the source node.
% or the probability of random walk to $u$ starting from $s$
\textbf{Personalized PageRank.} The personalized PageRank vector $\bm{p}_s$ of a source node $s$ in a graph $G=(V,E)$ is defined as the unique solution of Equation \ref{equ:ppr}, i.e.,
\begin{equation}
    \bm{p}_s = \alpha \bm{e}_s + (1-\alpha) \bm{p}_s M,
    \label{equ:ppr}
\end{equation}
where $\alpha$ is a teleport constant between 0 and 1, $\bm{e}_s$ is the indicator vector with a single nonzero element of 1 at $s$, $M$ is a transition matrix and $M=D^{-1}A$ where $D$ and $A$ are degree matrix and adjacency matrix. The definition of personalized PageRank is equivalent to simulating a random walk starting from $s$, and $\bm{p}_s$ is a probability vector where $\bm{p}_s(u)$ is the probability that a certain random walk beginning at $s$ terminates at $u$. 
% Let $\bm{p}_s$ denote the personalized PageRank vector of a source node $s$ in a graph $G=(V,E)$, where $\bm{p}_s(u)$ describes the relevance of a node $u$ to the source node $s$.

\textbf{Approximate personalized PageRank (APPR).} The first algorithm proposed to calculate personalized PageRank is power iteration \cite{page1999pagerank}, which requires high time complexity and not effective in large-scale networks.
Therefore, various efficient approximate solutions of personalized PageRank have been proposed, and the most widely used one is called the ``local push'' algorithm \cite{andersen2006local}. This algorithm starts with all probability residual on the source node of the graph, and pushes the residual to the neighbors iteratively. During the iterations, the residual of each node can be transformed into the relevance to the source node. Finally an estimate of $\bm{p}_s$ can be obtained. In the algorithm, a residual vector $\bm{r}_s$ is used to maintain the residual, where $\bm{r}_s(u)$ denotes the residual of node $u$. By setting $\bm{p}_s=\vec{0}$ and $\bm{r}_s=\bm{e}_s$ for initialization, the local push procedure updates the value of $\bm{p}_s(u)$ as follows:
\begin{equation}
    \begin{cases}
    & \bm{p}_s(u)=\bm{p}_s(u)+\alpha \bm{r}_s(u) \\
    & r(v)=r(v)+(1-\alpha)r(u) / d(u), \\
    % & \bm{r}_s(v)=\bm{r}_s(u)+\theta(u,v)\bm{r}_s(u)
    \end{cases},
    \label{equ:local_push_appr}
\end{equation}
where $v \in N(u)$ is the neighbor of $u$, and 
% the push factor $\theta(u,v)$ between $u$ and $v$ is defined as
% \begin{equation}
%     \theta(u,v)=\frac{1-\alpha}{d(u)},
% \end{equation}
% where 
$d(u)$ is the degree of $u$.
The local push procedure stops when the residual of each node in $G$ is within $\epsilon$.

\section{Proposed Approach}
\label{sec:proposed_approach}
% This section introduces the Push-Pop model and designs the TTR algorithm in Push-Pop model for blockchain transaction tracking.
This section introduces the architecture, implementation, and theoretical analysis of TRacer.

\subsection{Architecture Overview}
TRacer consists of three main modules including graph construction, graph expansion, and local community detection. Each module is described as follows:
\begin{itemize}
\item \textbf{Graph construction}: This module models the money transfer relationships between accounts as directed, weighted, temporal, and multi-relationship graphs.
% in which the nodes and the edges represent the accounts and the transactions in the blockchain, respectively.
\item \textbf{Graph expansion}: Since the blockchain data contains billions of transactions which is too large for common graph algorithms, this module aims to find a relevant subgraph from a risky source. The module contains four operations: \textit{\textbf{Expand}} collects all edges related to a given node, \textit{\textbf{Push}} merges the collected edges to the subgraph, \textit{\textbf{Rank}} computes the relevance of nodes in the subgraph to the source, and \textit{\textbf{Pop}} selects a node for expanding.
The graph expansion process terminates when the end condition is met.
\item \textbf{Local community detection}: This module \textit{\textbf{Extract}}s a local community of the source node from the expended subgraph, in which nodes have higher relevance to the source than nodes out of the community.
\end{itemize}

\subsection{Graph construction}
Since there are multiple types of tokens in account-based blockchain trading systems, we formulate the money transfer relationships among accounts into a directed, weighted, temporal, and multi-relationship graph $G=(V, E)$, where $V$ is the node set representing accounts, $E$ is the edge set representing the token transfer relationships.
% with $\varphi_E: E\rightarrow\{{\rm token \ types}\}$ for edge-type mapping
% , and $X$ denotes the set of attributes attached to edges including transaction hash (unique to the corresponding external transaction), amount, timestamp, and token symbol. 
An edge $e=(u,v,w,t,b,h)$ denotes that account $u$ transfer $w$ units token $b$ to account $v$ at timestamp $t$ with a transaction hash $h$. We define mapping functions $f_{src}$, $f_{tgt}$, $f_{amt}$, $f_{ts}$, and $f_{sym}$ to map each edge to its source, target, amount, timestamp, and token type respectively. There are multiple types of edges indicating the transfer relationships of different tokens.

In addition, many blockchain services such as decentralized exchanges act as the intermediary for token swap. 
To reveal the token flows after users interact with these services, we categorize the money transfer relationships into two patterns, namely \textbf{Xfer} and \textbf{Swap}.
As shown in Figure \ref{fig:Xfer}, accounts send or receive tokens through the Xfer pattern, and the related DeFi actions \cite{wu2021defiranger} in this pattern include: 1) \textbf{\textit{transfer}}: An account sends an amount of token to another account, 2) \textbf{\textit{minting}}: A token contract mints an amount of token to an account, and 3) \textbf{\textit{burning}}: An account burns an amount of token.
While accounts exchange tokens for other kinds of tokens through the Swap pattern as shown in Figure \ref{fig:Swap}, including three related DeFi actions: 1) \textbf{\textit{add liquidity}}: An account deposits an amount of token to a DeFi app, and receives a certain amount of Liquidity Provider (LP) token back, 2) \textbf{\textit{remove liquidity}}: An account sends an amount of LP token to a DeFi app for destroying and gets a certain amount of other tokens back, and 3) \textbf{\textit{trade}}: An account sells an amount of token A in a DeFi app for a certain amount of token B. We identify the transaction relationships involving both the sending of tokens and the receiving of another kind of tokens with the same hash as the relationships in the Swap pattern, and otherwise as the relationships in the Xfer pattern. 
% In this way, we define the pattern mapping for the edges in the transaction graph:
% \begin{equation}
%     P: E \rightarrow \{Xfer, Swap\}.
% \end{equation}

% In addition, considering the token transfer, there are two types of patterns for a transaction in TRacer, namely \textbf{Xfer} and \textbf{Swap}.
% A transaction is recognized as Swap if the transactions with the same transaction hash ensure an account sending and receiving tokens at the same time, otherwise Xfer.
% These two patterns are summarized from the transaction actions defined by token contracts and DeFi DApps. 
% Several main actions \cite{wu2021defiranger} are listed below:
% \begin{itemize}
%     \item \textbf{Transfer}: An account sends an amount of token to another account.
%     \item \textbf{Minting}: A token contract sends an amount of minted token to an account. 
%     \item \textbf{Burning}: An account sends an amount of token to a token contract for destroying.
%     \item \textbf{Add Liquidity}: An account deposits an amount of token to the DeFi app, which also sends its an amount of Liquidity Provider (LP) token to the account.
%     \item \textbf{Remove Liquidity}: An account sends an amount of LP token to a DeFi app for destroying, and the account gets a number of other tokens back.
%     \item \textbf{Trade}: An account sells an amount of token A in a DeFi app for a certain amount of token B.
% \end{itemize}

\begin{figure}[t]
    \centering
    \subfigure[The Xfer pattern]{
    \includegraphics[width=0.3\linewidth]{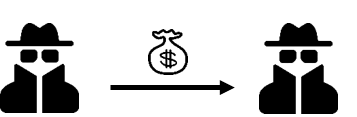}
    \label{fig:Xfer}
    }
    \subfigure[The Swap pattern] {
    \includegraphics[width=0.35\linewidth]{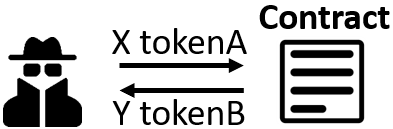}
    \label{fig:Swap}
    }
    \caption{(a) Xfer: Sending or receiving tokens.
    % Transaction actions in this pattern include transfer, minting, and destroying. 
    (b) Swap: Exchanging tokens for other kinds of tokens. 
    % Transaction actions in this pattern include add liquidity, remove liquidity, and trade.
    }
\end{figure}

\subsection{Graph Expansion}
The graph expansion module aims to obtain a relevant subgraph expanded from a given seed via iterating four operations shown in Figure \ref{fig:TRacer}. In what follows, we introduce the design of these four operations.
\subsubsection{\textit{Push} and \textit{Rank}: Transaction Tracing Rank}
The \textit{Push} operation merges the expanded edges to the subgraph in each iteration. While \textit{Rank} calculates the relevance of nodes in the subgraph to the source.
% While Rank is the core operation of graph extension in TRace, which determines the rules for calculating the relevance of nodes to the source node in the subgraph.

We introduce APPR to calculate the node relevance. 
More specifically, the Rank operation executes the local push procedure for the incremental update of node relevance.
% Especially, according to some characteristics of the transaction graph in account-based blockchain, we design a novel local push procedure. 
% The relevance computed using this novel local push procedure is named \textbf{Transaction Tracing Rank} (TTR).
Based on the characteristics of blockchain transaction graph in account-based blockchains, we propose a novel local push algorithm named \textbf{\underline{T}ransaction \underline{T}racing \underline{R}ank (TTR)} to compute the node relevance in blockchain transaction graph. We develop four local push strategies in TTR, namely \textbf{tracing tendency, weighted pollution, temporal reasoning, and token redirection}. We use $r_s(u,t,b)$ to denote the residual of node $u$ brought from token $b \in B$ at timestamp $t \in \Gamma$, where $\Gamma$ is the timestamp set and $B$ is the token symbol set in the subgraph.
% In the local push procedure of TTR, the edge attributes of time and token symbol need to be considered, thus we use $r_s(u,t,b)$ to denote the residual attached with a timestamp $t \in \Gamma$ and a token symbol $b \in B$ in the node $u$, where $\Gamma$ is a set of timestamps, and $B$ is a set of token symbols.
% % the residual is stored by a dictionary structure with the definition below:
% % \begin{equation}
% %     r_s: V \times \Gamma \times B \rightarrow \mathbb{R},
% %     \label{req:residual_mapping}
% % \end{equation}
% % where $V$ denotes a set of nodes, $\Gamma$ denotes a set of timestamps, and $B$ denotes a set of token symbols.
In this way, the local push procedure in TTR is described as Algorithm \ref{alg:ttr_local_push}. During initialization, a part of the residual of node $u$ is transformed into the relevance (line 1). Moreover, other parts of the residual are pushed to the neighbors according to the four strategies.

\begin{algorithm}[t]
    \caption{TTR local push}
    \label{alg:ttr_local_push}
    \begin{algorithmic}[1]
        \REQUIRE Node $u$, edges $E(u)$ related to $u$, edge mapping functions $f_{src}$, $f_{tgt}$, $f_{amt}$, $f_{ts}$, $f_{sym}$, rank $\bm{p}_s$, residual $r_s$,
        timestamps $\Gamma$, token symbols $B$, 
        teleport constant $\alpha$, and tracing tendency $\beta$.
        \ENSURE The updated $\bm{p}_s$ and $r_s$.
        %% 自环
        % \STATE $\Gamma=\{f_{ts}(e) | e\in E(u)\}$
        % \STATE $B=\{f_{sym}(e) | e\in E(u)\}$
        \STATE $\bm{p}_s(u) = \bm{p}_s(u) + \alpha \sum\limits_{b \in B}\sum\limits_{t \in \Gamma}r_s(u,t,b)$
        \STATE $r'_s(u,t,b)=r_s(u,t,b), r_s(u,t,b)=0$, $\forall t\in \Gamma, \forall b\in B$.
        \FOR{$(t,b) \in \Gamma \times B$}
            % \STATE $res = r'_s(u,t,b)$
            %% 出边
            \STATE $E_{out}=\{e\in E(u)|f_{ts}(e)>t\wedge f_{src}(e)=u\wedge f_{sym}(e)=b\}$
            \STATE $E_{in}=\{e\in E(u)|f_{ts}(e)<t\wedge f_{tgt}(e)=u\wedge f_{sym}(e)=b\}$
            \FOR{$E' \in \{ E_{out}, E_{in} \}$}
                % \STATE $W=\sum\limits_{e \in E'}f_{amt}(e)$.
                \STATE $\gamma = \beta \ if \  E'==E_{out} \ else \ (1-\beta)$
                \FOR{$e \in E'$}
                    \STATE $E'_{\rho}=\rho(\{e\}, E(u))$ // obtained by Equation \ref{equation:token_redirect}
                    \FOR{$e' \in E'_{\rho}$}
                        \STATE $\Delta = \frac{(1-\alpha)\gamma f_{amt}(e)}{|E'_{\rho}|\sum\limits_{e ''\in E'}f_{amt}(e'')} r'_s(u,t,b)$
                        \STATE $v=f_{src}(e') \ if \ E'==E_{in} \ else \ f_{tgt}(e')$
                        \STATE $r_s(v,f_{ts}(e'),f_{sym}(e'))+=\Delta$
                    \ENDFOR
                    % \STATE $r_s(f_{amt}(e'),f_{ts}(e'),f_{sym}(e'))+=\frac{(1-\alpha)\gamma f_{amt}(e)}{|E'_{\rho}|\sum\limits_{e \in E'}f_{amt}(e)} r'_s(u,t,b)$ for $e'\in E'_{\rho}$
                \ENDFOR
                \STATE $r_s(u,t,b)=(1-\alpha)\gamma r'_s(u,t,b)$ if $|E'|==0$.
            \ENDFOR
            % \STATE $r_s(u,t,b)=(1-\alpha)\beta r'_s(u,t,b)$ if $|E_{out}(u)|==0$.
            % \STATE $r_s(u,t,b)=(1-\alpha)(1-\beta) r'_s(u,t,b)$ if $|E_{in}(u)|==0$.
        \ENDFOR
        \RETURN $\bm{p}_s, r_s$
    \end{algorithmic}
\end{algorithm}
% In this way, the TTR local push procedure in a node $u$ can be described by the equations below:
% \begin{equation}
%     \begin{cases}
%         & \Delta \bm{p}_s(u) = \alpha \sum\limits_{b \in B} \sum\limits_{t \in \Gamma} r_s(u,t,b) \\ 
%         & \Delta r_s(v_{out},t,b) = (1-\alpha)\beta \rho 
%         \begin{pmatrix}
%         b, \theta 
%             \begin{pmatrix}
%             v_{out}, r_s, \tau(t,E(u))
%             \end{pmatrix}
%         \end{pmatrix}
%          \\ 
%         & \Delta r_s(v_{in},t,b) = (1-\alpha)(1-\beta) \rho
%         \begin{pmatrix}
%         b, \theta 
%             \begin{pmatrix}
%             v_{in}, r_s, \tau(t,E(u))
%             \end{pmatrix}
%         \end{pmatrix}
%     \end{cases},
% \end{equation}
% in which $t\in \Gamma$, $b \in B$, $v_{out} \in N_{out}(u)$ and $v_{in} \in N_{in}(u)$ denote out-degree neighbor and in-degree neighbor of $u$ respectively, $\beta$ denotes the tracing tendency coefficient controlling the attention for different neighbors, $\theta$ denotes a weight pollution function focusing on distributing the residual by edge weight, $\tau$ denotes the temporal reasoning function aiming to select edges with specific timestamps, $\rho$ denotes the token redirection function for processing the transaction patterns, and $E(u)$ denotes the edges related to the node $u$.

\textbf{Tracing tendency}.
Since a money transfer relationship between accounts is directed, the attention to in-degree neighbors and out-degree neighbors during the local push procedure can be different.
% For example, tracing for the target of the fund flows oriented from a particular source needs to pay more attention to its out-degree neighbors, while tracing for the source of money needs to pay more attention to the in-degree neighbors.
In most cases, tracing for the destinations of money flows oriented from a particular source needs to pay more attention to the out-degree neighbors. Therefore, we define an attention coefficient $0 \leq \beta \leq 1$. During the residual propagation, the out-degree neighbors in a transaction relationship can get a higher residual when $\beta > 0.5$, and the in-degree neighbors can receive a higher residual when $\beta < 0.5$. As Figure \ref{fig:TTR_strategies}(a) shows, the in-degree neighbors and the out-degree neighbors obtain a different weight in this strategy. Line 6-17 in Algorithm \ref{alg:ttr_local_push} describe the residual propagation procedure of different directions with the attention coefficient assigned at line 7. 
%Therefore, considering the transaction graph $G$ is directed and giving a tracing tendency coefficient $0 \leq \beta \leq 1$, the out-degree neighbors in a transaction relationship get a higher residual when $\beta > 0.5$ for tracing the target of the fund flow, and the in-degree neighbors in a transaction relationship get a higher residual when $\beta < 0.5$ for tracing the source of the fund flow.
% As Figure \ref{fig:tracing_tendency} shows, the out-degree neighbors of a node $u$ get a higher residual than the in-degree neighbors, when $\beta > 0.5$.
% In this way, the residual is split for pushing to out-degree neighbors and in-degree neighbors with different coefficients being $\beta$ and $1-\beta$ (line 7) in Algorithm \ref{alg:ttr_local_push}.

\textbf{Weighted pollution}. 
% Traditional APPR pushes the residual of a node to the neighbors by considering an equal weight during a local push procedure. 
% However, in a transaction relationship, the transaction strength is usually weighted by the transaction amount. 
% Note that the edge strength is usually weighted by the transaction amount.
In blockchain transaction graph, the strength of money transfer relationships is weighted by the transaction amount. For each node, a neighbor trading a larger amount of money with this node is considered to be more relevant to the node. 
% Therefore, by considering the transaction graph $G$ is weighted with the transaction amount information as the weight, 
Therefore, in this strategy we take the weight of the money transfer relationships into account. Figure \ref{fig:TTR_strategies}(b) shows an example of this strategy that neighbors of $u$ associated by edges with greater weight can obtain more residual during the propagation.
In this way, the residual of a node is distributed to each neighbor associated by edge $e \in E'$ according to the ratio of edge weight $f_{amt}(e) / \sum_{e'' \in E'}f_{amt}(e'')$ as shown in Algorithm \ref{alg:ttr_local_push} line 11, where $E'$ is the set of edges.

\textbf{Temporal reasoning}.
Blockchain transactions are recorded in blocks chronologically. Each block contains a specific timestamp. 
% Since the fund flow transferring process is dynamic, the transaction time information can be taken into consideration in the local push procedure.
With the timestamp information, tracing for the targets of money flows usually follows the paths with increasing timestamps, while tracing back to the source of money flows follows the paths with decreasing timestamps.
Therefore, in this strategy we take the temporal information into consideration. For example in Figure \ref{fig:TTR_strategies}(c), the residual from the input transaction at $t2$ is pushed out through the outgoing edges after $t2$ and the incoming edge before $t2$. Algorithm \ref{alg:ttr_local_push} line 4-5 show the edge set for residual propagation that satisfies this strategy.
% The timestamp is used to restrict the edges for pushing (line 4, 5) in Algorithm \ref{alg:ttr_local_push}.
Moreover, the residual is pushed to the node itself if the edge set is empty in Algorithm \ref{alg:ttr_local_push} line 16, indicating that the funds have not been transferred from the node.

\begin{figure}[t]
    \centering
    \includegraphics[width=0.9\linewidth]{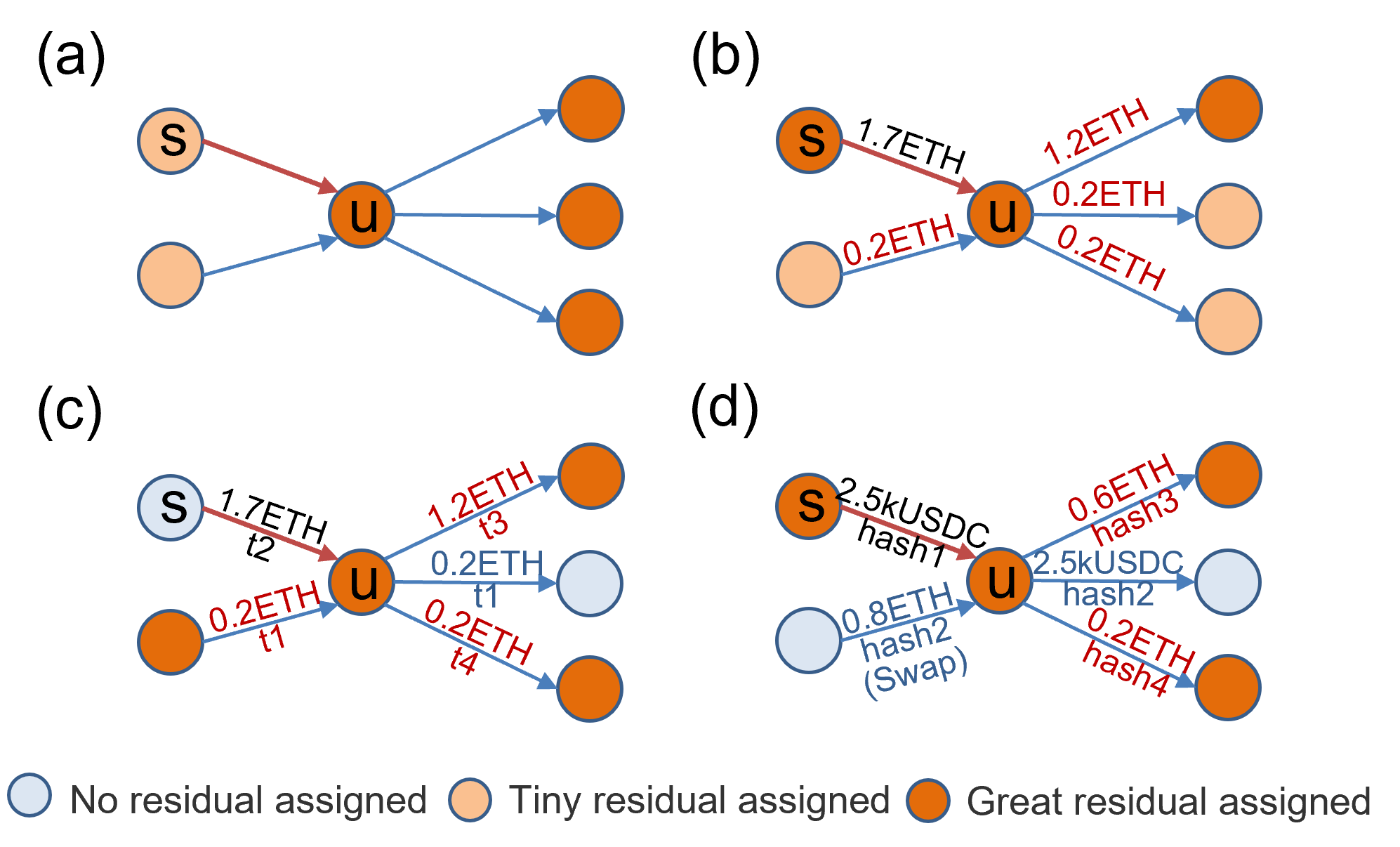}
    \caption{(a) Tracing tendency. 
    Different attention is assigned to the out-degree neighbors and in-degree neighbors.
    (b) Weighted pollution. The higher the edge weight, the closer the relationship. 
    (c) Temporal reasoning. Tracing money flow in chronological order, and $t1<t2<t3<t4$ in this example. 
    (d) Token redirection. Uncovering the token flows even though there exist complex DeFi actions in the Swap pattern.}
    \label{fig:TTR_strategies}
\end{figure}

% \begin{figure}[t]
%     \centering
%     \subfigure[Tracing tendency]{
%     \includegraphics[width=0.35\linewidth]{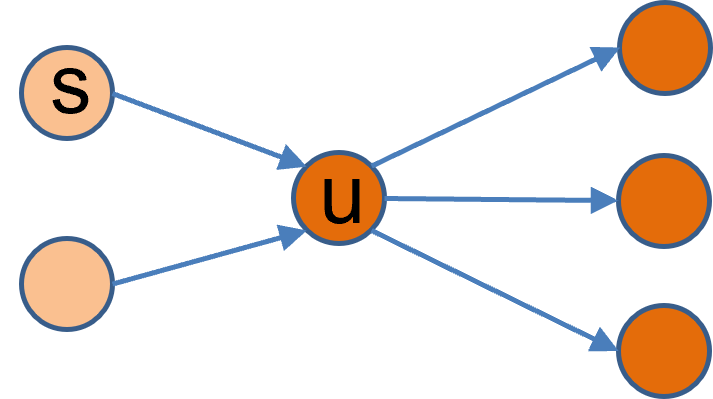}
%     \label{fig:tracing_tendency}
%     }
%     \subfigure[Weight pollution]{
%     \includegraphics[width=0.35\linewidth]{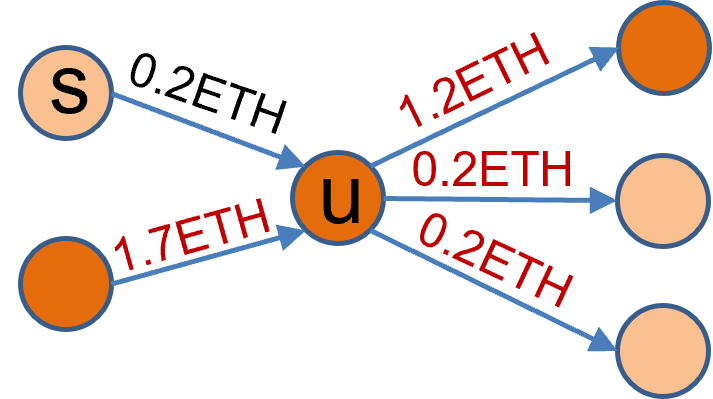}
%     \label{fig:weight_pollution}
%     }
%     \subfigure[Temporal reasoning]{
%     \includegraphics[width=0.35\linewidth]{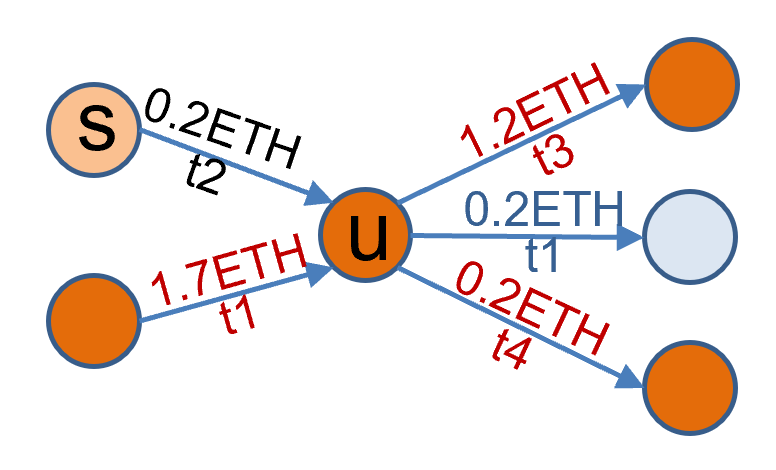}
%     \label{fig:temporal_reasoning}
%     }
%     \subfigure[Token redirection]{
%     \includegraphics[width=0.35\linewidth]{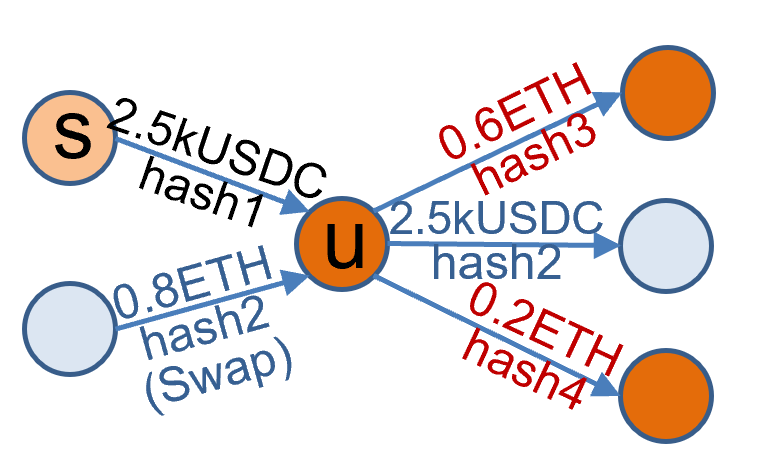}
%     \label{fig:token_redirection}
%     }
%     \caption{
%     (a) Tracing tendency. Different attention is assigned to the out-degree neighbors and in-degree neighbors.
%     (b) Weight pollution. The higher the edge weight, the closer the relationship. 
%     (c) Temporal reasoning. Tracing money flow in chronological order, and $t1<t2<t3<t4$ in this example. 
%     (d) Token redirection. Uncovering the token flow even though there exist complex transaction actions in the Swap pattern.
%     }
% \end{figure}

\textbf{Token redirection}.
% Besides native currency transfers, the transactions of account-based blockchain contain a large number of token transfers, while token redirection makes effort to reveal the real flow of interesting tokens.
This strategy makes the effort to uncover the flow of interesting tokens based on the transaction patterns of Xfer and Swap.
% In TRacer, the transaction patterns of Xfer and Swap are taken into consideration when it comes to tracing the token fund flow.
As Figure \ref{fig:TTR_strategies}(d) shows, the residual of node $u$ brought from the USDC token in an Xfer edge with hash1 should be pushed through the edges with hash3 and hash4, rather than the USDC outgoing edge with hash2. Since the edges with hash2 are in a Swap pattern and swapped the USDC token into ETH.
% As Figure \ref{fig:token_redirection} shows, the residual with USDC token received by the node $u$ in an Xfer edge with hash1 should have been pushed to the out-degree neighbor by the edge with USDC and hash2, but considering the edge with USDC and hash2 for redirection, the residual is pushed through the output edges with hash3 and hash4.
To achieve the redirection of token flows in complex transaction actions, we define a recursive function $\rho(\cdot,\cdot)$ which can find the initial state of a set of incoming edges before the Swap operations and the final state of a set of outgoing edges after the Swap operations within a node. For example, the initial state before Swap of the incoming edge with hash2 in Figure \ref{fig:TTR_strategies}(d) is the incoming edge with hash1, and the final state after Swap of the outgoing edge with hash2 is the outgoing edges with hash3 and hash4. More specifically, as shown in Algorithm \ref{alg:ttr_local_push} line 8-14, the residual of a specific token should be pushed through the redirected edges. Therefore, $\rho(\cdot,\cdot)$ satisfies the following recursive equation:
\begin{equation}\label{equation:token_redirect}
    \rho(\mathcal{E}, E(u))=\mathcal{E}_{xfer} \cup \rho(\bigcup\limits_{e \in \mathcal{E}_{swap}} redirect(e), E(u)),    
\end{equation}
where $\mathcal{E}$ is a set of edges for redirection, $\mathcal{E}_{xfer} \subset \mathcal{E}$ is a set of Xfer edges, $\mathcal{E}_{swap} \subset \mathcal{E}$ is a set of Swap edges, and $redirect(\cdot)$ selects the edges in the token types before/after Swap for incoming/outgoing edges $e \in \mathcal{E}_{swap}$ from edges $E(u)$ related to $u$.
% , which satisfies $f_{src}(e')=f_{src}(e)=u$ or $f_{tgt}(e')=f_{tgt}(e)=u$ for $\forall e' \in redirect(e)$.
% redirect selects the output/input edges from $E(u)$ with the swapped token symbol of the output/input edge $e \in \mathcal{E}_{swap}$.
Especially, $\rho(\emptyset, E(u)) = \emptyset$.

\subsubsection{\textit{Pop} and \textit{Expand}: Greedy Selection}
The \textit{Pop} operation selects a node from the subgraph for the next round expansion iteration, and the \textit{Expand} operation expands from a node by collecting all the related edges of this node.
% The \textit{Pop} operation selects a node with high priority from the subgraph in graph expansion, while \textit{Expand} collects all edges related to the selected node.
% In TRacer, the priority of the nodes is residual, where the node with a higher residual has a greater priority.
Note that the graph expansion in TRacer terminates when the residual of all nodes in the subgraph is below a threshold $\epsilon \in (0,1)$, i.e.,
\begin{equation}
    \max\limits_{u \in V}(\sum\limits_{t \in \Gamma}\sum\limits_{b \in B}r_s(u,t,b)) < \epsilon.
    \label{equ:end_cond}
\end{equation}
Thus we proposed the \textbf{greedy selection} for the \textit{Pop} operation to achieve the condition in Equation \ref{equ:end_cond}, i.e., the \textit{Pop} operation selects the node in the subgraph with the highest residual.
% Thus we proposed two types of node selection rules for the \textit{Pop} operation to achieve the condition in Equation \ref{equ:end_cond}, namely greedy selection, and group selection:
% \begin{itemize}
%     \item \textbf{Greedy selection}: Select one node in the subgraph with the highest residual.
%     \item \textbf{Group selection}: Select all nodes in the subgraph where the residual of every node is greater than $\epsilon$.
% \end{itemize}
% In our implementation, the greedy selection is chosen for \textit{Pop} operation enabled to reduce the pressure of data access in \textit{Expand}.

% \subsection{\textit{Extract}: Local Community Detection}
\subsection{Local Community Detection}
Referring to Figure~\ref{fig:TRacer}, the \textit{Extract} operation employs local community detection to construct a small-scale local community from the expanded graph. For a particular risky source node, the importance rank of the nodes within the local community is significantly higher than the external nodes, making it easy for further expert auditing.

The less conductance \cite{andersen2006local, andersen2007local} of the local community, the higher rank of the nodes in the local community than external nodes, in which the conductance is:
\begin{equation}
    \Phi(S)=\frac{\bm{p}_s(\partial(S))}{\bm{p}_s(S)},
\end{equation}
where $S$ denotes the nodes of the local community, the boundary $\partial(S)=\{ v| (u,v) \in E \land u \in S \land v \in \bar{S} \}$, $\bar{S}$ is the complement of $S$ and $\bm{p}_s(S)=\sum_{u \in S}\bm{p}_s(u)$ denotes the sum of rank over each node in the local community.
Given a specific threshold $\varphi > 0$ for conductance, the local community detection finds the local community satisfying:
\begin{equation}
    \Phi(S) < \varphi.
    \label{equ:local_comm_cond}
\end{equation}
With $S=\{s\}$ as the initialization, Algorithm \ref{alg:ttr_local_comm} describes how to find the local community satisfied the Equation \ref{equ:local_comm_cond}.
\begin{algorithm}[t]
    \caption{TTR-based Local Community Detection}
    \label{alg:ttr_local_comm}
    \begin{algorithmic}[1]
        \REQUIRE The source node $s$, the subgraph of graph expansion $G_s=(V_s, E_s)$, and the TTR score vector $\bm{p}_s$.
        \ENSURE The local community with nodes set $S$.
        \STATE $S = \{ s \}$
        \STATE $\bar{S} = V_s \setminus S$
        \WHILE{$\Phi(S) \geq \varphi$}
            \STATE $u =\mathop{\arg\max}_{v\in\bar{S}}\bm{p}_s(v)$
            % \STATE $u =$ the node with the highest rank in $\bar{S}$
            \STATE $S = S \cup \{ u \}$
            \STATE $\bar{S} = \bar{S} \setminus \{ u \}$
        \ENDWHILE
        \RETURN $G_s.subgraph(S)$
    \end{algorithmic}
\end{algorithm}
\subsection{Theoretical properties}
In this part, we discuss the theoretical properties of TRacer, and prove that our method is able to finish the transaction tracing task in large-scale transaction graphs with a constant time cost. Besides, we discuss the upper limit of the tracing depth in TRacer.

As the description in Proposition \ref{pot:cost}, the cost of TRacer is independent of the graph size, indicating that TRacer is able to trace on the large-scale transaction graph with a low cost.
\begin{proposition}
\label{pot:cost}
The iteration of graph expansion runs $O(\frac{1}{\epsilon \alpha})$ times, and the number of nodes with non-zero values in the output TTR score is at most $O(\frac{1}{\epsilon \alpha})$, which guarantees the cost of local community detection is $O(\frac{1}{\epsilon \alpha})$.
\end{proposition}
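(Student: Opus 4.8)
\smallskip
\noindent\emph{Proof sketch.}
The plan is to adapt the amortized analysis of the classical ``local push'' algorithm \cite{andersen2006local} to the TTR variant in Algorithm~\ref{alg:ttr_local_push}, tracking the total mass $\|\bm{p}_s\|_1 = \sum_{u}\bm{p}_s(u)$ of the rank vector. The first step is to establish the \emph{mass-conservation invariant}: at every point during graph expansion, $\|\bm{p}_s\|_1 + \sum_{u}\sum_{t\in\Gamma}\sum_{b\in B} r_s(u,t,b) = 1$ and all entries are non-negative. This holds at initialization ($\bm{p}_s=\vec{0}$, $r_s=\bm{e}_s$), and I would verify that one call to Algorithm~\ref{alg:ttr_local_push} preserves it: line~1 transfers exactly $\alpha\sum_{t,b} r'_s(u,t,b)$ from the residual into $\bm{p}_s(u)$, while the remaining $(1-\alpha)\sum_{t,b} r'_s(u,t,b)$ is redistributed without loss. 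Indeed, for each $(t,b)$ the out-/in-direction weights $\gamma\in\{\beta,1-\beta\}$ sum to $1$; within a direction the weighted-pollution fractions $f_{amt}(e)/\sum_{e''\in E'}f_{amt}(e'')$ sum to $1$; the token-redirection step splits each edge's contribution evenly among the $|E'_\rho|$ redirected edges, so it does not change the total mass leaving $u$ through any given $e$; and when a direction's edge set is empty the corresponding mass is returned to $u$ at line~16. It follows that $\|\bm{p}_s\|_1\le 1$ at all times and $\|\bm{p}_s\|_1$ is non-decreasing.

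Second, I would bound the iteration count of graph expansion. An iteration performs \emph{Pop}, \emph{Expand}, \emph{Push}, \emph{Rank}; neither \emph{Expand} nor \emph{Push} changes residual values before \emph{Rank} runs, and since the termination test (Equation~\ref{equ:end_cond}) has not yet fired, the node $u$ chosen by the greedy \emph{Pop} has $\sum_{t,b} r_s(u,t,b)\ge\epsilon$. Hence line~1 of Algorithm~\ref{alg:ttr_local_push} raises $\|\bm{p}_s\|_1$ by at least $\alpha\epsilon$. Combining this with $\|\bm{p}_s\|_1\le 1$ gives at most $1/(\alpha\epsilon)=O\!\left(\frac{1}{\epsilon\alpha}\right)$ iterations, which is the first claim. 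Because $\bm{p}_s$ is modified only at line~1 and only at the popped node, the support $\{u:\bm{p}_s(u)\neq 0\}$ has size at most the number of iterations, which is the second claim (and every such $u$ lies in the expanded subgraph $V_s$, since \emph{Push} adds the pushed edges together with their endpoints).

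Third, for the cost of Algorithm~\ref{alg:ttr_local_comm} I would observe that its \textbf{while} loop need never admit a zero-rank node: if $S$ already contains every $u$ with $\bm{p}_s(u)\neq 0$, then $\bm{p}_s(v)=0$ for all $v\in\bar{S}$, so $\bm{p}_s(\partial(S))=0$ and $\Phi(S)=0<\varphi$, meaning the loop has halted no later than that point. Therefore the loop runs at most $O\!\left(\frac{1}{\epsilon\alpha}\right)$ times, and with incremental maintenance of $\bm{p}_s(S)$ and $\bm{p}_s(\partial(S))$ the work per step is bounded by the number of edges incident to the $O(1/(\epsilon\alpha))$ relevant nodes; hence the overall cost is $O\!\left(\frac{1}{\epsilon\alpha}\right)$, independent of $|V|$ and $|E|$.

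The step I expect to be the main obstacle is the mass-conservation invariant for Algorithm~\ref{alg:ttr_local_push}: unlike vanilla APPR, the residual is indexed by $(u,t,b)$ and is split three ways simultaneously (direction, transaction amount, redirected-edge set) together with a special ``reflect to self'' case, so one must check carefully that the four local-push strategies compose without leaking or duplicating probability mass. In particular this requires that the recursion $\rho(\cdot,E(u))$ of Equation~\ref{equation:token_redirect} terminates (e.g.\ because each \emph{redirect} step strictly reduces the number of pending Swap hops inside $u$) and is mass-preserving in the sense used above. Once that invariant is in place, the remaining steps are the standard APPR amortization argument.
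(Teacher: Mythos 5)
Your proposal is correct and rests on the same underlying argument as the paper, but the paper's entire proof is a one-line citation (``This follows from Andersen et al., Lemma 2''), whereas you actually reconstruct that lemma's amortization and — more importantly — verify that it still applies to the TTR variant. That verification is the genuinely nontrivial content here: the paper never checks that the four TTR strategies (direction split by $\gamma\in\{\beta,1-\beta\}$, amount-weighted fractions, token redirection over $E'_\rho$, and the reflect-to-self case of line 16) compose into a mass-preserving push, and your computation that the fractions telescope to $(1-\alpha)\,r'_s(u,t,b)$ per $(t,b)$ is exactly what is needed to import the Andersen et al. bound. Two small remarks. First, the invariant you actually need is only $\|\bm{p}_s\|_1 + \|r_s\|_1 \le 1$ together with non-negativity, not exact conservation; so even if the redirection step leaks mass (e.g.\ if $\rho(\{e\},E(u))=\emptyset$ for some Swap edge $e$ with no matching redirected edges, a case Algorithm~\ref{alg:ttr_local_push} does not explicitly guard against), the bound $\|\bm{p}_s\|_1\le 1$ and hence the $O(\frac{1}{\epsilon\alpha})$ iteration count survive — the obstacle you flag is real for exactness but harmless for the proposition. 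Second, your observation that the local-community loop halts once $S$ absorbs the support of $\bm{p}_s$ (since then $\Phi(S)=0<\varphi$) is a detail the paper's citation does not cover at all, since Andersen et al.'s Lemma 2 says nothing about Algorithm~\ref{alg:ttr_local_comm}; supplying it is a genuine improvement over the paper's proof.
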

\begin{proof}
This follows from Andersen et al. \cite{andersen2006local}, Lemma 2.
\end{proof}

In addition, what depth can TRacer trace in a transaction graph from the source node is described in Proposition \ref{pot:max_depth}.
% where the higher order of the neighbor, the longer distance between the source node and this neighbor.
\begin{proposition}
\label{pot:max_depth}
An $n$-hop neighbor of the source node can be found in the graph obtained by graph expansion, in which $n$ satisfies:
\begin{equation}
    n \leq \frac{log(\epsilon)}{log(1-\alpha)} + 1.
\end{equation}
\end{proposition}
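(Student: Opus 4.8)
The plan is to combine two facts. First, a node appears in the expanded subgraph only when some \emph{already-expanded} node has it as an in- or out-neighbor, and, by the greedy \textit{Pop} rule together with the termination condition in Equation~\ref{equ:end_cond}, every expanded node carries total residual at least $\epsilon$ at the moment it is popped (the loop pops the maximizer of $\sum_{t,b} r_s(u,t,b)$ and only while that maximum is $\ge \epsilon$). Second, the total residual that can ever sit on a node decays geometrically in its hop-distance to $s$: a node at hop-distance $d$ never holds residual exceeding $(1-\alpha)^{d}$. Granting these, the argument is short: if an $n$-hop node $v$ is found, it is adjacent to some expanded node $u$ at hop-distance $n-1$, so $\epsilon \le \sum_{t,b} r_s(u,t,b) \le (1-\alpha)^{n-1}$; taking logarithms gives $\log(\epsilon) \le (n-1)\log(1-\alpha)$, and dividing by $\log(1-\alpha) < 0$ reverses the inequality to $n-1 \le \log(\epsilon)/\log(1-\alpha)$, which is the claimed bound. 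The ``$+1$'' is precisely the outermost layer of nodes that are discovered by an \textit{Expand}/\textit{Push} step but are themselves never expanded.

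The first fact is immediate from the description of \textit{Pop} and \textit{Expand}. For the second fact I would induct on the hop-distance $d$. The base case $d=0$ holds since the initial residual is $\bm{e}_s$, so $\sum_{t,b} r_s(s,t,b) \le 1 = (1-\alpha)^{0}$. For the inductive step, note that in a single TTR local push at a node $u$ (Algorithm~\ref{alg:ttr_local_push}), the first step removes the $\alpha$-fraction of $u$'s residual into $\bm{p}_s$, and the remaining loops redistribute only the surviving $(1-\alpha)$-fraction; moreover the amount forwarded to any single neighbor is scaled by $\gamma\in\{\beta,1-\beta\}$, by the amount ratio $f_{amt}(e)/\sum_{e''\in E'}f_{amt}(e'')$, and by $1/|E'_{\rho}|$, each of which is at most $1$ and which together sum to at most $1$ over all recipients. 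Hence one push sends at most $(1-\alpha)$ times $u$'s current residual onto any neighbor. Since every push that injects residual into a $d$-hop node originates at an already-expanded node at hop-distance $d-1$ (or farther, which only helps), the induction yields that each such injection is at most $(1-\alpha)\cdot(1-\alpha)^{d-1}=(1-\alpha)^{d}$; combined with the fact that a node's residual is zeroed at the start of its own push, the running residual at a $d$-hop node stays at most $(1-\alpha)^{d}$.

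The main obstacle is making the geometric-decay fact fully rigorous in the presence of residual \emph{accumulation}: the transaction graph is not acyclic, \textit{Pop} may expand the same node several times, and residual can flow back toward $s$ and return, so the naive per-node induction must be upgraded to a bound on the \emph{total} residual ever deposited into the distance-$d$ layer, proved by a mass-conservation / potential argument that uses that each push irrevocably removes an $\alpha$-fraction from the system. A cleaner route is to exploit the \textbf{temporal reasoning} strategy of TTR: within any node, residual carried by token $b$ at time $t$ is forwarded only along strictly-increasing (for out-edges) or strictly-decreasing (for in-edges) timestamp edges, which rules out long return cycles and lets the per-hop $(1-\alpha)$ decay be read off directly along monotone-timestamp paths. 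Before writing the final proof I would also fix the conventions that the additive constant depends on --- whether ``$n$-hop'' counts $s$ as hop $0$, and whether hop-distance is taken in the underlying undirected graph (it should be, since TTR pushes along both in- and out-edges with weights $\beta$ and $1-\beta$) --- since these are exactly what controls the ``$+1$''.
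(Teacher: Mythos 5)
Your proof has the same skeleton as the paper's: residual decays by a factor of $(1-\alpha)$ per hop, a node is expanded only while its residual is at least $\epsilon$, hence $(1-\alpha)^{n-1}\ge\epsilon$ and $n\le \log(\epsilon)/\log(1-\alpha)+1$, with the ``$+1$'' accounting for the last discovered-but-not-expanded layer. Where you differ is in how the per-hop decay is justified. The paper writes a layer-wise recurrence $r^{(n)}\ge(1-\alpha)(r^{(n-1)}-k_{n-1}\epsilon)$ and argues that the extremal case is a path-like graph with $\beta=1$, where all residual flows forward and $r^{(n)}=(1-\alpha)^n$ exactly; it then reads the depth bound off the stopping condition in that best case. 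You instead aim for a per-node statement (residual at a hop-$d$ node never exceeds $(1-\alpha)^d$) via induction, and you correctly flag that naive induction breaks under accumulation, repeated pops, and back-flow --- a difficulty the paper's extremal-case assertion simply does not engage with. Your proposed fix is the right one and does close the gap: the potential $\Psi=\sum_u(1-\alpha)^{-d(u)}r_s(u)$ starts at $1$ and is non-increasing under any TTR push, because a push at $u$ removes $(1-\alpha)^{-d(u)}r_s(u)$ from $\Psi$ while redistributing total mass at most $(1-\alpha)r_s(u)$ onto neighbors $v$ with $d(v)\ge d(u)-1$, contributing at most $(1-\alpha)^2(1-\alpha)^{-d(u)}r_s(u)$; this yields $r_s(v)\le(1-\alpha)^{d(v)}$ at all times without any appeal to acyclicity or the temporal strategy. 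So your route is, if anything, more rigorous than the paper's; the only caution is that your Fact 1 should be stated as ``a node is expanded only while the maximum residual, hence its own, is at least $\epsilon$'' exactly as Equation \ref{equ:end_cond} prescribes, and the hop distance must be taken in the underlying undirected graph since TTR pushes along both edge directions.
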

\begin{proof}
In order to maximize the rank of the neighbors far away from the source node, $\alpha$ needs to be as small as possible, and $\beta$ needs to be as close as possible to 0 or 1, which ensures that the residual can be pushed to a specific direction.
Let the sum of residual pushed from the source node $s$ to the $n$-hop neighbors be $r^{(n)}$.
Considering $\beta=1$ here, $r^{(n)}$ can be obtained by:
\begin{equation}
    \begin{cases}
        & r^{(1)}=(1-\alpha), \\
        & r^{(2)} \geq (1-\alpha)(r^{(1)}-k_1 \epsilon)=(1-\alpha)^2-(1-\alpha)k_1 \epsilon, \\
        & ...... \\
        & r^{(n)} \geq (1-\alpha)(r^{(n-1)}-k_{n-1} \epsilon) \\
        & \ \ \ \ \ \ \ \ =(1-\alpha)^n-\sum\limits_{i=1}^{n-1}(1-\alpha)^{n-i}k_{n-i} \epsilon, \\
    \end{cases}
\end{equation}
where $k_i$ represents the number of nodes with residual less than $\epsilon$ in the $i$-order neighbors of the source node.
Therefore, $r^{(n)}$ obtains the maximum value when:
\begin{equation}
    \sum\limits_{i=1}^{n-1}(1-\alpha)^{n-i}k_{n-i} \epsilon = 0,
\end{equation}
which means the residual of each $i$-order node is greater or equal than $\epsilon$. 
This situation exists when the source node is in a path-like graph, whose adjacency matrix $A$ satisfies $A(i,i+1)=1$ and other elements are $0$.
Considering the end condition of the local push procedure, the residual of $n$-hop neighbors is:
\begin{equation}
\begin{cases}
    & (1-\alpha)^n < \epsilon\\
    & (1-\alpha)^{n-1} \geq \epsilon \\
\end{cases}
    % r^{(n)}=(1-\alpha)^n \geq \epsilon
    \Rightarrow 
    \frac{log(\epsilon)}{log(1-\alpha)} < n \leq \frac{log(\epsilon)}{log(1-\alpha)} + 1.
    % n \leq \frac{log(\epsilon)}{log(1-\alpha)}.
\end{equation}
\end{proof}
\section{Experiments}
\label{sec:experiments}
In this section, we conduct experiments to evaluate the effectiveness of TRacer on a large-scale real-world dataset. Besides, we conduct case studies with network visualization techniques.

\subsection{Experimental Setups}
% 主要包括参数设置、数据集描述、指标、对比算法

\subsubsection{Dataset}
We contribute a benchmark dataset including 20 transaction tracing cases in the recent 5 years across three account-based blockchains, i.e., Ethereum, Binance Smart Chain, and Polygon. 
These cases are initialized by various illegal activities containing hacker attacks, Rug-pull, and scams which have caused billions of dollars in losses. 
All these cases are reported by blockchain security companies and verified by experts coming from Certik \footnote{https://www.certik.com/}, Peckchield \footnote{https://peckshield.com/}, Chainalysis \footnote{https://www.chainalysis.com/} and so on.
Some statistics of this dataset are shown in Table \ref{tab:dataset}. 
Note that the transaction data in the dataset is obtained from the open APIs\footnote{https://blockscan.com}.
As we can see, the activities of these cases have acrossed millions of blocks, and we have to trace the money flows of the sources among more than 4 billion transactions.

\begin{table}[t]
  \caption{Statistics of the transaction record data related to the cases }
  \label{tab:dataset}
  \begin{tabular}{l|l|l}
    \hline
    \textbf{Field} & \textbf{Description} & \textbf{Number} \\
    \hline
    Source nodes & \tabincell{l}{The source node related to this\\case, such as the hacker account\\ and the scam contract.} & 20 \\ \hline
    Target nodes & \tabincell{l}{A set of target nodes related to\\this case, such as exchange\\wallets and mixing services.}  & 0.87K \\ \hline
    Blocks & \tabincell{l}{The blocks related to these cases.}  & 23.5M \\ \hline
    Transactions & \tabincell{l}{The transactions contained in the\\blocks related to these cases.} & 4.83B \\ \hline
  \end{tabular}
\end{table}

\subsubsection{Compared Methods}
We compare our method with several baseline blockchain transaction tracing methods. For a fair comparison, we use the general framework of TRacer to reproduce the following comparison methods, including:
\begin{itemize}
    \item \textbf{BFS} \cite{zhao2015graph}: Breadth-First Search, which is the first and the most commonly used transaction tracing method.
    \item \textbf{Poison} \cite{moser2014towards}: A kind of taint analysis technology in blockchain transaction tracing. Each output of a transaction with a dirty input is considered to be tainted in this method.
    \item \textbf{Haircut} \cite{moser2014towards}: A kind of taint analysis technology in blockchain transaction tracing. Each output of a transaction with a dirty input is considered to be tainted partially according to the amount value in this method. 
    \item \textbf{APPR} \cite{andersen2006local}: The approximate personalized PageRank algorithms, which can calculate the relevance of nodes in a network to a given source node with an extremely low cost.
    % \item \textbf{TRacer}: The method we proposed, where the graph construction recognizes the transaction patterns, the TTR and greedy selection are used for graph expansion, and the local community detection is applied in Extract.
    % using a novel local push procedure, and we abbreviate the TTR algorithm with tracing tendency, weight pollution, and temporal reasoning strategies as TTR-base, TTR-weight, and TTR-time respectively.
\end{itemize}

The details of implementing the above transaction tracing technologies can be found in our GitHub page \footnote{https://github.com/wuzhy1ng/BlockchainSpider}.

\subsubsection{Experimental Settings} 
% {\color{blue}Based on the experiences, it is enough for BFS and Poison to trace the 2-order neighbors of the source node, since the increase of order leads to the exponential growth of the size of subgraph in graph expansion for these two methods, bringing great difficulty to transaction auditing.}
%%  !!!! 这段话大有问题
Since the increase of depth can lead to the exponential growth of the size of output graph in BFS and Poison, bringing great difficulty to transaction auditing. During our experiments, we set the upper limit of the tracing depth in these two methods is 2. In addition, we use the Haircut method to trace the ``dirty money'' from the source node until the amount proportion of ``dirty money" of all nodes is less than 0.1\% of that from the source node.
Moreover, we set $\alpha=0.15$, $\epsilon=10^{-3}$ for APPR and TTR, and $\varphi=10^{-3}$, $\beta=0.7$ for TTR to ensure that our method is able to find the paths among the source node and the target nodes with 42-hop at most, according to Proposition \ref{pot:max_depth}.

\subsubsection{Metrics}
we report the average of the following metrics in all cases to measure the effectiveness of transaction tracing:
% We use the following metrics to measure the effectiveness of transaction tracing:
\begin{itemize}
    \item \textbf{Recall}: The recall evaluates how many target nodes can be traced by a method, which is defined as: $Recall = \frac{|V_t|}{|\bar{V_t}|}$, where $|V_t|$ is the number of traced target nodes and $|\bar{V_t}|$ is the number of all target nodes in a case.
    
    \item \textbf{Number of nodes}: This metric measures the number of nodes in the output graph of a case. A smaller output graph with recall ensured is easier for expert auditing.
    % Ensuring the recall, the fewer node number, the more conducive to expert verification.
    
    \item \textbf{Tracing depth}: This metric measures how deep can a transaction tracing method traverse the transaction graph from a source node, indicating that up to $K$-hop neighbors of the source node are detected.
\end{itemize}

\subsection{Experimental Results}
% 参数敏感性实验
\subsubsection{Scalability vs. Performance}
\begin{figure}[t]
    \centering
    \includegraphics[width=0.65\linewidth,height=3.7cm]{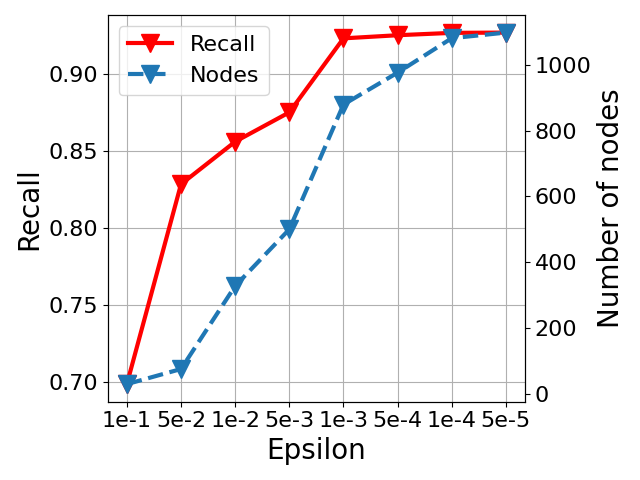}
    \vskip -0.1in
    \caption{The relationship among Epsilon, Recall, and Node number. Note that the recall has reached 70\% with $\epsilon = 10^{-1}$ merely, and the increment of recall becomes slow when $\epsilon$ is less than $10^{-3}$.}
    \label{fig:epsilon_metrics}
\end{figure}
The approximation parameter $\epsilon$ is an important hyper-parameters in modulating the scalability.
To examine the effect of $\epsilon$ on the performance, we repeat experiments with different values of $\epsilon$ and report the recall as well as the number of traced nodes. 
As Figure \ref{fig:epsilon_metrics} shows, the recall has reached 70\% when $\epsilon = 10^{-1}$.
When $\epsilon$ is less than $10^{-3}$, the increment of recall becomes slow, and the number of nodes rapidly increases. 
Therefore, setting $\epsilon = 10^{-3}$ can guarantee a higher recall and fewer nodes with a relatively low cost in experiments, and we set $\epsilon = 10^{-3}$ for TRacer.
% , which is the reason why we set $\epsilon = 10^{-3}$ for TRacer.

% 对比实验
\subsubsection{Comparative experiment}
\begin{table}[t]
  \caption{
  Performance comparison between baselines and TRacer
%   Comparative experiment results. Besides the number of tracing nodes being more than APPR slightly, the recall and tracing depth are better than other methods significantly.
  }
  \label{tab:compared_methods}
  \begin{tabular}{cccc}
    \toprule
    Methods & Recall (\%) & Number of nodes (K) & Tracing depth\\
    \midrule
    BFS & 77.02 & 52.50 & 2.00\\
    Poison & 70.06 & 41.45 & 2.00\\
    Haircut & 58.85 & 10.35 & 4.15\\
    APPR & 71.92 & \textbf{0.66} & 3.60\\
    \textbf{TRacer} & \textbf{92.31} & 0.87 & \textbf{5.05}\\
  \bottomrule
\end{tabular}
\end{table}
Table \ref{tab:compared_methods} shows the performance of different methods, from which we can obtain the following observations.
\textbf{\textit{Firstly}}, the output graphs of BFS and Poison contain an extremely large number of nodes, which brings great difficulty to transaction auditing even through detecting more than 70\% target nodes.
\textbf{\textit{Secondly}}, the output graph of Haircut has fewer nodes than BFS and Poison with a greater tracing depth, but the recall is too low to achieve effective tracing.
\textbf{\textit{Thirdly}}, fewest nodes are obtained by APPR, ensuring the detection results can be easier audited by experts. 
\textbf{Fourthly}, TRacer obtains better performance than APPR. On the basis of the advantages of APPR, the recall and tracing depth of TRacer are significantly better than other methods.

% 消融实验
\subsubsection{Ablation experiment}
\begin{table}[t]
  \caption{Ablation experiment.}
  \label{tab:ablation_experiment}
  \setlength{\tabcolsep}{0.1mm}{
  \begin{tabular}{p{2.5cm}<{\centering}p{1.4cm}<{\centering}p{2.2cm}<{\centering}|p{0.6cm}<{\centering}p{1.8cm}<{\centering}}
    \toprule
    Graph construction with DeFi patterns & Graph expansion & Local community detection & Recall (\%) & Number of nodes (K)\\
    % pattern recognition & expansion & detection & (\%) & of nodes (K)\\
    \midrule
    $\surd$ & $\surd$ & $\surd$ & 92.3 & 0.87\\
     & $\surd$ & $\surd$ & 80.3 & 0.55\\
    $\surd$ & $\surd$ &  & 95.9 & 57.5\\
  \bottomrule
\end{tabular}}
\end{table}
TRacer is consist of three modules including graph construction, graph expansion, and local community detection. 
In order to discuss the function of different modules, we conduct an ablation study and report the performance of TRacer in Table \ref{tab:ablation_experiment} after removing the DeFi pattern recognition in graph construction and the local community detection.
% remove the graph construction module recognizing the edge patterns or local community detection module extracting the important local community, and report the performance as shown in Table \ref{tab:ablation_experiment}.
When the DeFi pattern recognition is removed in the graph construction module, the recall decreases significantly, which shows that the understanding of DeFi patterns in TRacer can help trace the money flows effectively.
Moreover, if the local community detection module is removed, the number of nodes increases significantly with the weakly improvement of recall, which shows that the local community detection module can find the nodes strongly associated with the source node at the cost of a small recall loss.

% Top N recall
\subsubsection{Top\textit{N} Recall}
\begin{figure}[t]
    \centering
    \includegraphics[width=0.6\linewidth]{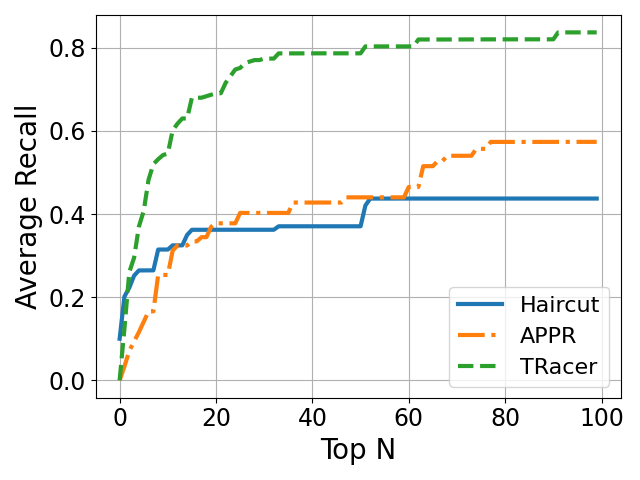}
    \caption{Top $n$ most relevant nodes and the recall. 
    % TRacer tends to give higher importance to the target nodes.
    }
    \label{fig:topn_recall}
\end{figure}

Since the rank of a node represents the relevance relationship between this node and the source node, we can audit the nodes according to the descending order of rank. To compare the rank-based methods including Haircut, APPR, and TRacer, we take out the top $n$ most relevant nodes to the source and calculate the recall for different $n$.
The result is displayed in Figure \ref{fig:topn_recall}, where the curve of TRacer shows a better performance than other methods.
In addition, TRacer achieves a 25\% recall gain over APPR when $n>50$.

\subsection{Case Study}
\label{sec:case_study}
% 该部分对案例中的TTR局部交易网络进行分析
% In this part, we visualize the fund flow on cases with verified labels by Gephi 0.9.2 \cite{bastian2009gephi}.
In this part, we visualize the traced money flow of two cases with Gephi 0.9.2 \cite{bastian2009gephi}, in order to evaluate the feasibility of TRacer.
% In this section, we give the analysis of cases with the local tracing network captured by our method through gephi.

\subsubsection{Cryptopia}\label{sec:case_study_Cryptopia}
\begin{figure}[t]
    \centering
    \includegraphics[width=0.8\linewidth]{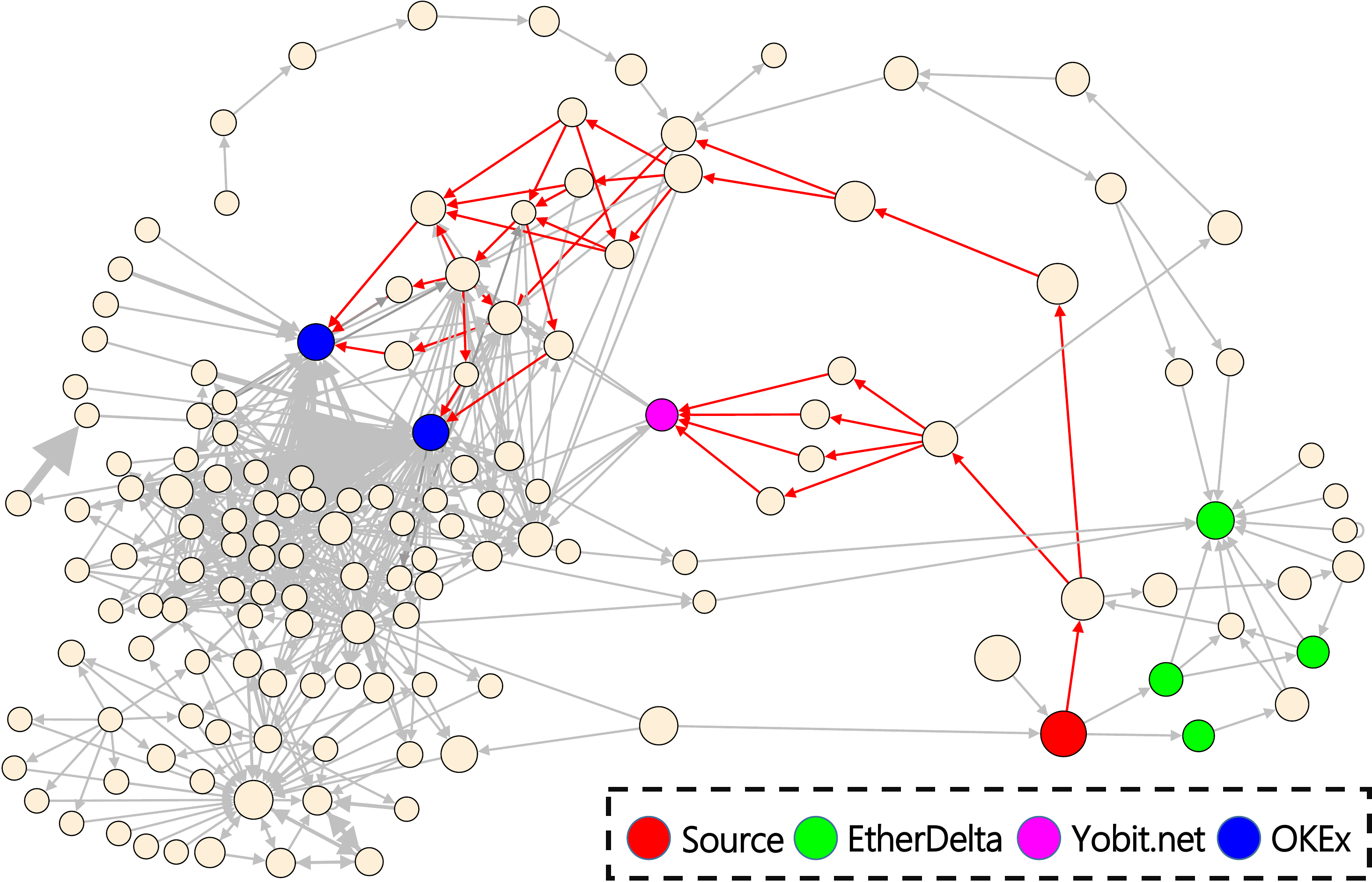}
    \caption{Tracing visualization for Cryptopia. Besides the EtherDelta exchange marked by experts, our method also finds the other target exchanges including Yobit.net and OKEx.}
    \label{fig:ttr_example_cryptopia}
\end{figure}
The Cryptopia exchange was attacked by hackers in May 2019.
According to the tracing result published by CoinHolmes\footnote{https://trace.coinholmes.com}, the source node with prefix 0xd4e79 \footnote{https://etherscan.io/address/0xd4e79226f1e5a7a28abb58f4704e53cd364e8d11} possessed 30.8K stolen ETH from Cryptopia and transferred about 10K ETH to 4 target nodes labeled as EtherDelta.

Figure \ref{fig:ttr_example_cryptopia} presents the transaction tracing result of our method, where the source node and the target nodes are marked with labels. 
Additionally, the node size is proportional to its rank score for each node, so the higher the rank is, the larger the node diameter is.
Based on the tracing result, we can find 4 target nodes labeled as EtherDelta (an exchange) in the 2-hop neighborhood of the source node easily, which is consistent with the results in CoinHolmes.

Moreover, another 4-hop neighbor of the source node labeled as Yobit.net\footnote{https://etherscan.io/address/0xf5bec430576ff1b82e44ddb5a1c93f6f9d0884f3} and two nodes labeled as OKEx can be found in the figure, which is not reported by CoinHolmes.
In fact, Yobit.net and OKEx are exchanges enabling the hacker to cash out the stolen ETH.
% In this way, Yobit.net and OKEx are reasonable to be target nodes.
According to the traced money flows in the figure, about 1420 stolen ETH is transferred into Yobit.net, and 18.47K stolen ETH is transferred into OKEx.
Therefore, more than 97\% of the stolen ETH of Cryptopia are traced by our method in this case. 

% Moreover, another 4-hop neighbor of the source node, labeled as Yobit.net with the address prefix 0xf5bec \footnote{https://etherscan.io/address/0xf5bec430576ff1b82e44ddb5a1c93f6f9d0884f3}, can also be found in the figure, which is not discovered by CoinHolmes.
% In fact, Yobit.net is an exchange enabling the hacker to cash out the stolen ETH, which is reasonable to be a target node.
% And from these fund flow paths, we can find the other 1420 stolen ETH.

% Besides, we can find two nodes labeled as OKEx in the figure, which is the deposit address of the OKEx exchange. 
% According to the traced ETH flows in the figure, it's reasonable that the hackers cashed out another part of the stolen ETH via OKEx, which is about 18.47K ETH.
% Therefore, more than 97\% of the stolen ETH of Cryptopia can be found with our method in this case.

\subsubsection{Kucoin}
\begin{figure}[t]
    \centering
    \includegraphics[width=0.95\linewidth]{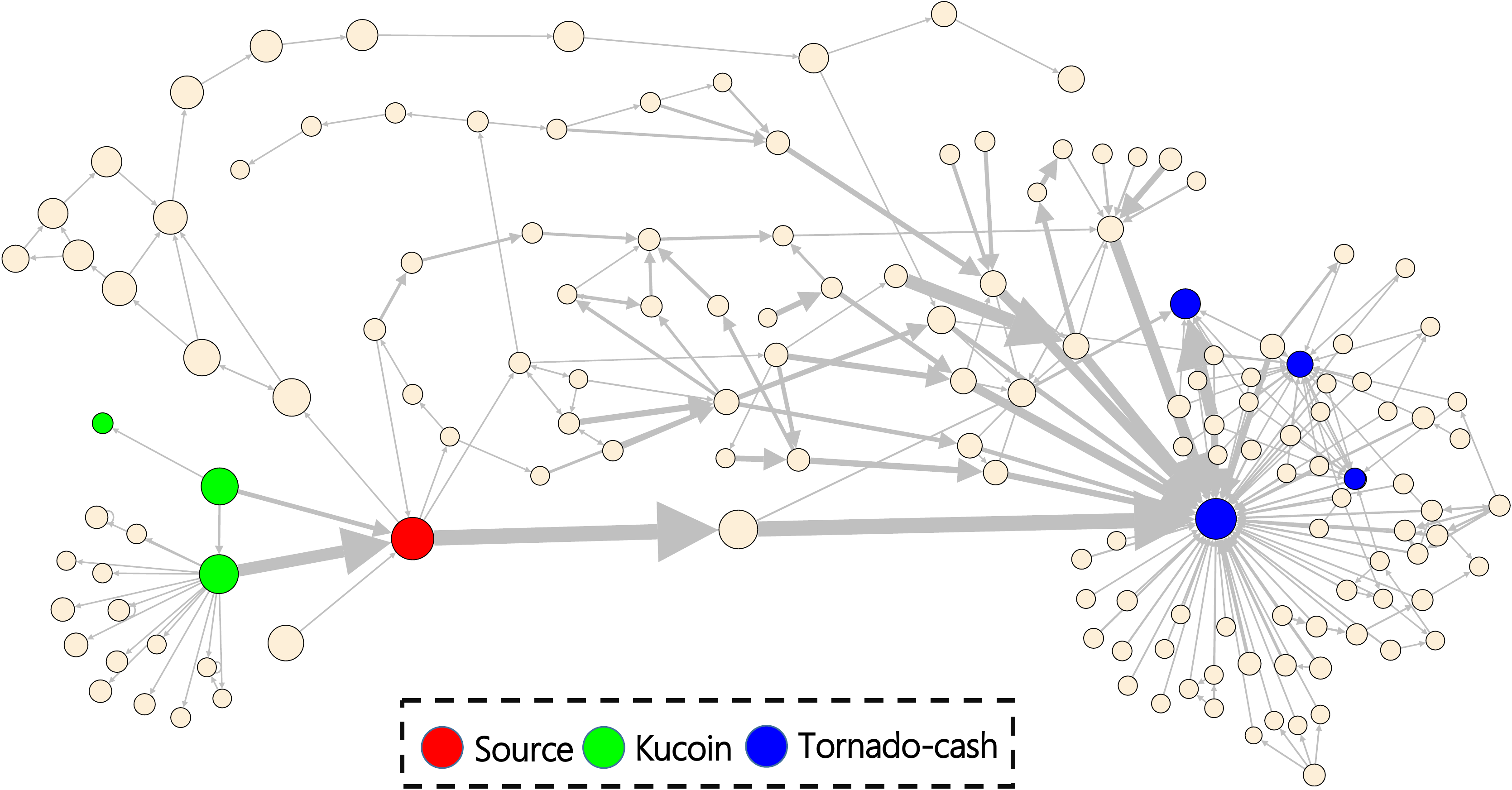}
    \caption{Tracing visualization for Kucoin. A large number of ETH was transferred to Tornado Cash.}
    \label{fig:ttr_example_kucoin}
\end{figure}
The Kucoin exchange was attacked by hackers in September 2020, and there is still a large number of stolen funds that have not been traced until now.
As shown in Figure \ref{fig:ttr_example_kucoin}, the source node \footnote{https://etherscan.io/address/0xeb31973e0febf3e3d7058234a5ebbae1ab4b8c23} of the hackers obtained ETH from the nodes labeled as Kucoin, and then transferred most of the stolen ETH to a famous privacy-preserving protocol named Tornado Cash \cite{tornadocashdoc}.

Through the tracing result of our method, a total amount of 13.8K ETH was transferred to the 100 ETH pool of Tornado Cash, which is similar to the conclusion of experts in SlowMist\footnote{https://coinyuppie.com/uncovering-tornado-cashs-anonymity/}. As Tornado Cash is a decentralized mixing service, it is impossible for us to obtain valuable KYC information from this project to identify the hackers. 
In addition, since Tornado Cash is designed based on zero-knowledge proof, liquidity mining, and smart contract, it is difficult to trace the downstream fund flow when money is transferred into Tornado Cash. Therefore, some researchers have conducted analysis on Tornado Cash in recent years \cite{beres2020blockchain}.
\section{Conclusion}
\label{sec:conclusion}
In this paper, we studied the transaction tracing problem on account-based blockchain trading systems and proposed the first intelligent transaction tracing tool named TRacer. Compared with existing methods based on heuristics and taint analysis, which usually requires expert experience and manual intervention, TRacer show obvious superiority in terms of universality, effectiveness and time cost. In TRacer, we first formulated the transaction records in each account-based blockchain as a directed, weighted, temporal, and multi-relationship graph, which is able to represent the rich semantics of complex multi-token transaction relationships in the system. Then we proposed to use personalized PageRank-based graph searching technologies on this complex graph to trace the money flows. Specifically, we introduced novel approximate personalized PageRank strategies to realize effective and low-cost transaction tracing, which can also handle the complex DeFi transaction actions in account-based blockchains. The theoretical analysis and experimental results demonstrated the effectiveness of TRacer. 
% However, TRacer still cannot trace the fund flow using privacy enhancement technology, especially the mixing services such as Tornado Cash.
In the future, we will further delve into blockchain transaction tracing by integrating more transaction features like bytecodes and logs and design effective methods for blockchain systems with privacy-enhancing mechanisms.
% Moreover, our method can capture the subgraph of the source node, which will be used to carry out more effective node classification in future work.

%%
%% The acknowledgments section is defined using the "acks" environment
%% (and NOT an unnumbered section). This ensures the proper
%% identification of the section in the article metadata, and the
%% consistent spelling of the heading.
% \begin{acks}
% \end{acks}

%%
%% The next two lines define the bibliography style to be used, and
%% the bibliography file.
\bibliographystyle{ACM-Reference-Format}
\bibliography{bibfile}

%%
%% If your work has an appendix, this is the place to put it.
\appendix

\end{document}